\newtheorem{proposition}{Proposition}
\newif\ifshowchanges
  \newcommand{\change}[1]{\textcolor{blue}{#1}}
  \newcommand{\change}[1]{#1}
\begin{document}

\title{Two-Phase Channel Estimation for RIS-Assisted THz Systems with Beam Split}


\author{Xin~Su, Ruisi~He,~\IEEEmembership{Senior Member,~IEEE}, Peng~Zhang, Bo~Ai,~\IEEEmembership{Fellow,~IEEE}, Yong~Niu,~\IEEEmembership{Senior Member,~IEEE} and Gongpu~Wang,~\IEEEmembership{Member,~IEEE}
		\thanks{Part of this paper was published at the IEEE VTC 2024 Spring.}
		\thanks{X. Su, R. He, P. Zhang, B. Ai and Y. Niu are with the State Key Laboratory of Advanced Rail Autonomous Operation, the School of Electronics and Information Engineering, and the Frontiers Science Center for Smart High-speed Railway System, Beijing Jiaotong University, Beijing 100044, China (e-mail: 22120124@bjtu.edu.cn; ruisi.he@bjtu.edu.cn; 21120170@bjtu.edu.cn; boai@bjtu.edu.cn; niuy11@163.com).}
	\thanks{G. Wang is with the Beijing Key Laboratory of Transportation Data Analysis and Mining, School of Computer and Information Technology, Beijing Jiaotong University, Beijing 100044, China, and also with the China (Wuxi) Institute of Internet of Things, Wuxi 214111, China (e-mail: gpwang@bjtu.edu.cn).}

	\thanks{This work is supported by the Fundamental Research Funds for the Central Universities under Grant 2022JBQY004, and the National Natural Science Foundation of China under Grant 62431003 and 62271037.}

}

\maketitle

\begin{abstract}
Reconfigurable intelligent surface (RIS)-assisted terahertz (THz) communication is emerging as a key technology to support the ultra-high data rates in future sixth-generation networks. However, the acquisition of accurate channel state information (CSI) in such systems is challenging due to the passive nature of RIS and the hybrid beamforming architecture typically employed in THz systems. To address these challenges, we propose a novel low-complexity two-phase channel estimation scheme for RIS-assisted THz systems with beam split effect. In the proposed scheme, we first estimate the full CSI over a small subset of subcarriers (SCs), then extract angular information at both the base station and RIS. Subsequently, we recover the full CSI across remaining SCs by determining the corresponding spatial directions and angle-excluded coefficients. Theoretical analysis and simulation results demonstrate that the proposed method achieves superior performance in terms of normalized mean-square error while significantly reducing computational complexity compared to existing algorithms.

\end{abstract}

\begin{IEEEkeywords}
Terahertz, beam split, reconfigurable intelligent surface, wideband channel estimation, direction estimation
\end{IEEEkeywords}

\section{Introduction}
\IEEEPARstart{E}{nvisioned} as a cornerstone technology to meet the requirements of future 6G networks, terahertz (THz) communication holds immense potential for a myriad of applications within the field of the Internet of Things, ranging from intelligent industrial manufacturing to digital twinning technologies and immersive experiences like holographic telepresence and the metaverse \cite{song2011present,qin2023multi,niu2024solid}. However, THz communication systems encounter various challenges, including high propagation loss, limited coverage, and susceptibility to atmospheric absorption and scattering \cite{zhang2023coupling,chen2021intelligent}. One notable drawback is the severe signal attenuation experienced by transmitted signals, which often necessitate a line-of-sight (LoS) link to maintain communication quality \cite{yuan20223d}. To address these challenges and unlock the full potential of THz communication, reconfigurable intelligent surface (RIS) has emerged to provide virtual LoS path by intelligently manipulating the propagation environment \cite{zhang2023joint}. RIS can dynamically adjust reflection properties of a large number of passive reflecting elements to optimize signal transmission and reception \cite{bjornson2022reconfigurable}. By strategically controlling the phase shifts of these reflecting elements, RIS can effectively enhance the signal strength, mitigate interference, and improve the overall performance of THz communication systems. Despite these benefits, the RIS elements remain passive and lack inherent signal processing capabilities. Accordingly, the performance enhancement enabled by RIS relies significantly on the accurate channel state information (CSI), which is challenging to acquire. Therefore, investigating an efficient channel estimation scheme for RIS-assisted THz communication systems is of great importance.
\subsection{Related Works}
Over the past few years, numerous research works have been dedicated to exploring channel estimation in RIS-assisted high frequency systems \cite{wang2020compressed,xwei2021channel,ardah2021trice,zhou2022channel,ning2021terahertz,jian2020modified}. Specifically, a sparse representation of RIS-assisted millimeter wave (mmWave) channel has been established, enabling the application of compressed sensing based techniques for CSI acquisition \cite{wang2020compressed}. In \cite{xwei2021channel}, a low-overhead channel estimation method has been proposed based on the double-structure characteristic of RIS-assisted angular cascaded channel. To reduce computational complexity, \cite{ardah2021trice} proposes a non-iterative two-stage framework based on direction-of-arrival (DoA) estimation. The authors in \cite{zhou2022channel} propose a two-phase channel estimation strategy to reduce pilot overhead. \cite{ning2021terahertz} investigates cooperative-beam-training based channel estimation procedure. In \cite{jian2020modified}, a sparse Bayesian learning based channel estimation approach has been developed to compensate the severe off-grid issue. 

The above solutions have effectively addressed the channel estimation problem of narrowband systems with utilization of the angular domain sparsity. However, mmWave/THz communications typically adopt a hybrid beamforming architecture to reduce hardware costs\cite{ning2023beamforming}, making beam squint/beam split effect inevitable. These effects occur due to the wider bandwidth of mmWave/THz signals and the enormous number of RIS elements, which causes path components to split into distinct spatial directions at different subcarrier (SC) frequencies \cite{tan2019delay}. Such phenomenon significantly degrades achievable data rates, thereby counteracting the benefits of RIS-assisted THz systems.

To address this issue, beam squint/beam split effect in RIS-assisted mmWave/THz systems has been investigated in \cite{elbir2023bsa,dovelos2021channel,su2024channel,wu2023parametric}. In particular, \cite{elbir2023bsa} introduces a beam-split-aware orthogonal matching pursuit (BSA-OMP) approach to automatically mitigate beam split effect. In \cite{dovelos2021channel}, a low-complexity beam squint mitigation scheme is proposed based on generalized-OMP estimator. The authors in \cite{su2024channel} propose a robust regularized sensing-beam-split-OMP-based scheme to estimate beam split affected RIS-assisted cascaded channel. The authors in \cite{wu2023parametric} regard the wideband channel estimation problem as a block-sparse recovery problem by concatenating signals at different SCs. These works primarily focus on mitigating beam split effect within the context of cascaded channels using well-designed dictionaries covering the entire beamspace. Although they can achieve accurate channel estimation, it requires execution at each SC or simultaneous processing of all SCs, which may introduce additional unaffordable computations.

An alternative approach is to focus on the physical directions of the channel components to reduce computational complexity \cite{ozen2024beam,ma2021wideband,liu2021cascaded,tan2021wideband,wei2022accurate}. In mmWave band, angle-based channel estimation methods are used to simplify the wideband channel estimation process under the beam squint effect \cite{ozen2024beam,ma2021wideband}. The authors in \cite{ozen2024beam} first estimate angle-of-arrivals (AoAs) at BS and then employed BSA-based algorithms to acquire full CSI. \cite{ma2021wideband} presents a twin-stage OMP method to estimate the path angles of the RIS-assisted cascaded two-hop channel. \cite{liu2021cascaded} introduces a Newtonized OMP (NOMP) based algorithm to detect the channel parameters. \change{These approaches perform angle estimation by leveraging the principle that the frequency-dependent spatial angles change while frequency-independent physical angles remain constant across different SCs. This principle can also be applied in THz systems, however, due to the significantly larger beam spacing in THz band compared to mmWave systems, the frequency-dependent spatial angles exhibit greater deviations.
Moreover, the solutions discussed above fail to leverage the wider bandwidth available in THz band to better differentiate between true physical and false spatial angles.
Consequently, the angle-based methods used in mmWave systems may become neither efficient nor effective when applied to the THz band. }

To this end, several low-complexity estimation schemes have been developed specifically for wideband THz systems \cite{tan2021wideband,wei2022accurate}. For example, \cite{tan2021wideband} proposes a beam-split-pattern detection-based estimation method that prioritizes physical DoA determination. In \cite{wei2022accurate}, a three-stage wideband channel estimation approach is presented, where AoAs and angles-of-departure (AoDs) are successively refined to reconstruct the channel under beam split effect. These methods achieve a favorable trade-off between complexity and performance in traditional architectures. \change{However, these methods are not readily applicable to RIS-assisted THz systems due to the lack of mechanisms for angular estimation at RIS. Moreover, the passive nature of RIS further complicates the accurate acquisition of angular information, making it difficult to resolve the coupled AoAs/AoDs.} Therefore, this motivates the development of a low-complexity channel estimation scheme in RIS-assisted wideband THz systems with beam split effect.
\begin{figure}[!t]
	\centering
	\includegraphics[width=0.48\textwidth]{./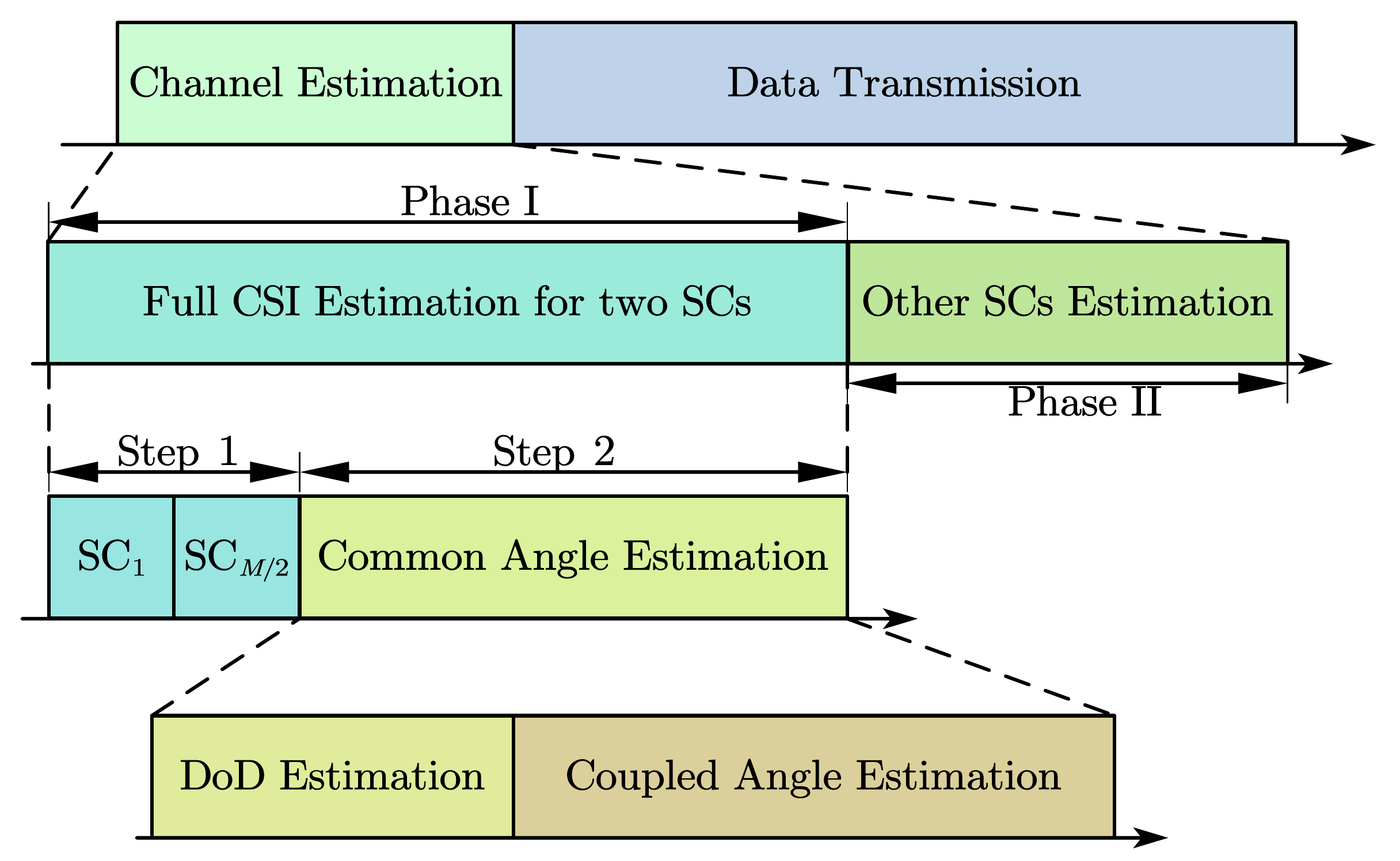}
	\caption{Illustration of the two-phase channel estimation protocol and corresponding frame structure.}
	\label{fig:frame_structure}
\end{figure}
\subsection{Main Contributions}
To address the issues discussed above, we presented some preliminary results on estimating the RIS-assisted cascaded channel \cite{su2024generalized}. In this paper, we significantly expand on the previous work and propose a two-phase efficient channel estimation scheme for RIS-assisted wideband THz systems with beam split. Our main contributions are summarized as~follows:
\begin{itemize}
    \item \change{We propose a novel low-complexity channel estimation scheme tailored for RIS-assisted wideband THz communication systems under the beam split effect. Unlike conventional methods that rely on full-band sparse recovery across all subcarriers, the proposed scheme adopts a two-phase protocol to achieve accurate CSI estimation with substantially reduced computations. 
    Specifically, as illustrated in Fig.~\ref{fig:frame_structure}, Phase I estimates the full CSI at two carefully selected SCs, while  Phase II efficiently reconstructs the remaining CSI using only a simple least squares (LS) algorithm, thereby avoiding the need of computationally expensive recovery procedures across the full bandwidth.}

    \item \change{The proposed low-complexity design is primarily enabled by the two-step approach implemented in Phase~I. In Step~1, we recover the full CSI at two selected SCs utilizing a simplified cyclic beam split (CBS) dictionary, which is compressed from the original size of $Q^2$ to a linear scale of $2Q-1$, thereby significantly reducing the complexity of sparse recovery. In Step~2, we estimate the physical directions at both the BS and RIS based on the recovered CSI. Specifically, we propose an energy maximum (EnM) algorithm for directions-of-departure (DoDs) estimation at BS and a double sensing-MUSIC (DS-MUSIC) algorithm to resolve the coupled angles at RIS. Instead of mitigating the severe beam misalignment introduced by beam split effect, the proposed algorithm leverages the resulting spatial separation of RIS angles within the extended angular range, which facilitates accurate estimation of physical directions. 
      } 
    
    \item \change{We provide a comprehensive complexity analysis, which demonstrates that our proposed protocol significantly reduces computational complexity compared to existing sparse recovery solutions. Furthermore, we conduct extensive numerical evaluations  under both mmWave and THz system settings. The results demonstrate that: (i) the proposed two-phase scheme efficiently exploits the beam split properties to enhance estimation accuracy compared to mmWave solutions; and (ii) it achieves superior NMSE performance over conventional baselines, approaching that of the oracle-LS benchmark. These findings highlight the potential of our method for efficient channel estimation in RIS-assisted THz communication systems.}

\end{itemize}

The remainder of this paper is organized as follows. Section \ref{sec:System Model} presents the system model for RIS-assisted wideband THz communication with beam split effect along with the associated channel estimation protocol. Section \ref{sec:full_CSI_estimation}, \ref{sec:angle_estimation} and \ref{sec:CE_other_SC} describe the proposed two-phase channel estimation framework in detail. Section~\ref{sec:complexity_analysis} analyzes the computational complexity, and Section~\ref{sec:Numerical Results and Discussion} provides simulation results and performance comparisons with existing methods. Finally, Section~\ref{sec:Conclusion} concludes the paper.

{\it{Notations}}: Boldface lower-case and capital letters represent the column vectors and matrices, respectively. $\left(\cdot \right)^*$, $(\cdot )^{\mathsf{T}}$, $\left(\cdot \right)^{\mathsf{H}}$ and $\left(\cdot \right)^{\dagger}$ denote the conjugate, transpose, transpose-conjugate and pseudo-inverse operation, respectively. $\mathbb{E}$ and $\mathrm{Var}$ represent the statistical expectation and variance, respectively. $\mathbb{N}$, $\mathbb{C}$ and $\mathbb{R}$ denote the set of integers, complex numbers and real numbers, respectively. $\mathbf{I}_{N}$ is the $N\times N$ size identity matrix and $\mathrm{tr}$ is the trace. $\otimes $, $\odot$ and $\bullet$ denote the Kronecker product, Khatri-Rao product and transposed Khatri-Rao product, respectively. $[\mathbf{A}]_{i,j}$ represents the $(i,j)$-th element of matrix $\mathbf{A}$, and $[\mathbf{a}]_i$ represents the $i$-th element of vector or set $ \mathbf{a} $. Besides, the operator $\text{Diag}(\mathbf{x})$ constructs a diagonal matrix whose diagonal elements are given by the elements of the vector $\mathbf{x}$. The notation $a:b$ denotes the sequence $\{a, a+1, \dots, b\}$ with $a \leq b$. The imaginary unit is denoted as  $\jmath = \sqrt{-1}$.

\section{System and Channel Model}\label{sec:System Model}
\subsection{Signal and Channel Model}
As depicted in Fig. \ref{fig:system_model}, we consider a RIS-assisted wideband THz MIMO system employing hybrid beamforming architecture that integrates both analog and baseband processing. Assuming that the BS equipped with $N_{\mathrm{T}}$ antennas and $N_\mathrm{RF}$ (where $N_\mathrm{RF}<N_{\mathrm{T}}$) RF chains is deployed to serve $K$ single-antenna UEs. The RIS employs a uniform linear array (ULA) comprised of $N_{\mathrm{R}}$ reflecting elements, which is manipulated by a smart controller operated by the BS. Besides, the orthogonal frequency division multiplexing (OFDM) system is considered with $M$ SCs, central frequency $f_c$ and bandwidth $B$. 
\begin{figure}[!ht]
	\centering
	\includegraphics[width=0.48\textwidth]{./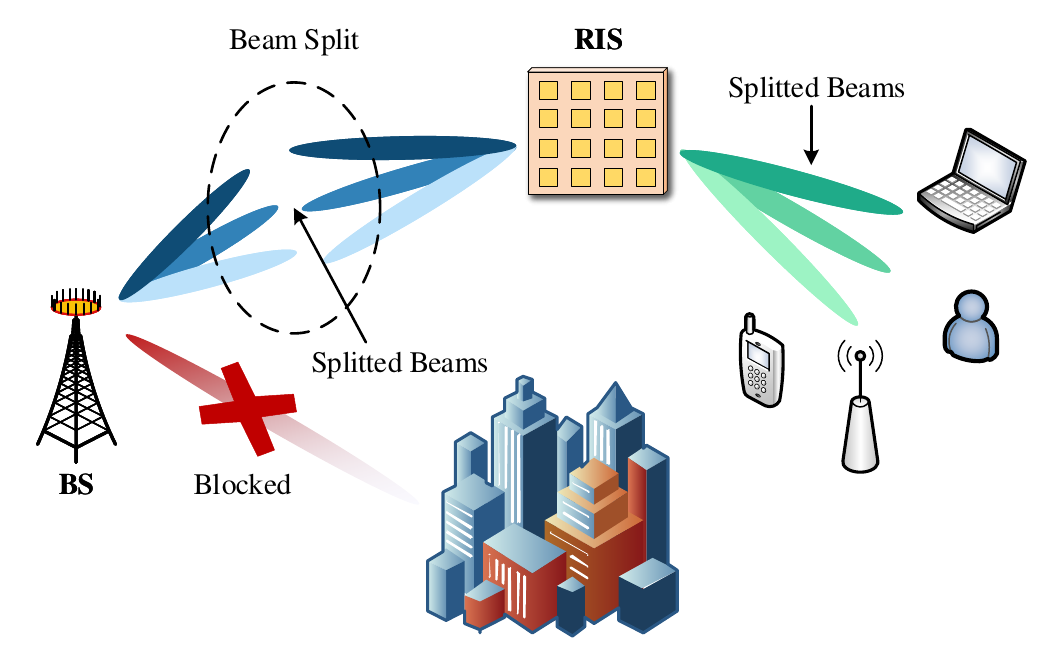}
	\caption{Illustration of RIS-assisted THz communication system with beam split effect.}
	\label{fig:system_model}
\end{figure}

In this paper, we assume that the channel is quasi-static, i.e., the channel remains approximately constant in a channel coherence block with $D$ subframes \cite{zhou2022channel}. For each coherence block, channel estimation is first performed during $T$ subframes, and downlink data transmission is then carried out during the remaining $D-T$ subframes. Moreover, each subframe of the downlink channel estimation period is divided into $P$ time slots, with $P$ baseband beamformers operating alternatively across different time slots.
Additionally, we assume that the direct BS-UE links are blocked by obstacles \cite{wu2023parametric}\footnote{The proposed scheme can be easily extended to one scenario where direct BS-UE and cascaded BS-RIS-UE links coexist. The direct channel can be estimated via conventional channel estimation approaches by turning off RIS.}. Therefore, we mainly focus on the downlink channel estimation of the cascaded BS-RIS-UE channel.

As a cost-performance trade-off, hybrid beamforming architecture is typically employed in THz MIMO systems\cite{ning2023beamforming}. During $P$ successive time slots, the frequency-dependent baseband beamformer vectors $\left\{ \mathbf{w}_{t,p}^{\mathrm{BB}}[m] \right\} _{p=1}^{P}\in \mathbb{C} ^{N_{\mathrm{RF}}}$ are sequentially employed combining with the frequency-independent analog beamformer $\mathbf{W}^{\mathrm{RF}}\in \mathbb{C} ^{N_{\mathrm{T}}\times N_\mathrm{RF}}$. Each entry of $\mathbf{W}^{\mathrm{RF}}$ is subject to the unit-modulus constraint, i.e., $|[\mathbf{W}^{\mathrm{RF}} ]_{i,j}|=1/\sqrt{N_{\mathrm{T}}} $ for $i=1, \ldots, N_{\mathrm{T}}$ and $j=1, \ldots, N_{\mathrm{RF}}$. We assume that the analog beamformer remains unchanged over multiple channel coherence blocks due to the relatively static nature of the RIS physical location compared to the dynamic channel coefficients \cite{zhou2022channel}\footnote{Note that the angular information remains unknown for channel estimation.}. Therefore, we define $\mathbf{w}_{t,p}[m]\triangleq \mathbf{W}^{\mathrm{RF}}\mathbf{w}_{t,p}^{\mathrm{BB}}[m]\in \mathbb{C} ^{N_{\mathrm{T}}}$ as the hybrid beamformer vector. Let $\mathbf{G}\left[ m \right]\in \mathbb{C} ^{N_{\mathrm{R}}\times N_{\mathrm{T}}}$ and $\mathbf{h}_k\left[ m \right] \in \mathbb{C} ^{N_{\mathrm{R}}}$ denote the channel from BS to RIS and RIS to $k$-th UE at the $m$-th SC, respectively. At time slot $p$ of subframe $t$, the received measurement $y_{k,t,p}[m] $ by $k$-th UE at the $m$-th SC can be expressed as  
\begin{equation}\label{equ:y_p}
		\\
			y_{k,t,p}[m]=\!\boldsymbol{\theta }_t\mathrm{Diag}\left( \mathbf{h}_k\left[ m \right] \right)\! \mathbf{G}\left[ m \right] \mathbf{w}_{t,p}[m]s_{t}[m]\!+\!n_{k,t,p}[m],
\end{equation}
where $n_{k,t,p}[m]\sim \mathcal{CN} \left( 0,2\varepsilon\right) $ is the complex additive white Gaussian noise (AWGN). Besides, $\boldsymbol{\theta }_t = \left[ e^{\jmath \theta_{t,1}}, \cdots, e^{\jmath \theta_{t,N_{\mathrm{R}}}}\right]$ is the phase-shift matrix of RIS at subframe $t$ with a constant reflection amplitude constraint \cite{yang20161}. 

Let $\mathbf{r}_{t,p}[m]\triangleq \mathbf{w}_{t,p}[m]s_{t}[m]\in \mathbb{C} ^{N_{\mathrm{T}}}$ denote the processed transmitted signal and combining the processed signals during $P$ successive time slots $\mathbf{R}_t[m]\triangleq \left [\mathbf{r}_{t,1}[m],...,\mathbf{r}_{t,P}[m] \right] \in \mathbb{C} ^{N_{\mathrm{T} }\times P}$, the combined measurement vector $\tilde{\mathbf{y}}_{k,t}[m] \in \mathbb{C}^{1 \times P}$ is expressed as 
\begin{equation}\label{equ:y}
\tilde{\mathbf{y}}_{k,t}[m]=\boldsymbol{\theta }_t\mathrm{Diag}\left( \mathbf{h}_k\left[ m \right] \right)\! \mathbf{G}\left[ m \right]\mathbf{R}_t[m]+\tilde{\mathbf {n}}_{k,t}[m],
\end{equation}
where $\tilde{\mathbf{n}}_{k,t}[m]\triangleq \left[ n_{k,t,1}[m],...,n_{k,t,P}[m] \right] \in \mathbb{C} ^{1\times P}$ denotes the combined AWGN. Therefore, the cascaded channels can be expressed as \cite{yu2019enabling}
\begin{equation}\label{equ:CasCh}
	\mathbf{H}_{k}^{\mathrm{cas}}\left[ m \right] =\mathrm{Diag}\left( \mathbf{h}_k\left[ m \right] \right)  \mathbf{G}\left[ m \right] \in \mathbb{C}^{N_{\mathrm{R}}\times N_{\mathrm{T}}}.
\end{equation}


We adopt the widely used Saleh-Valenzuela channel model for both BS-RIS channel and RIS-UE channel \cite{tse2005fundamentals}. The BS-RIS channel $\mathbf{G}[m]$ at the $m$-th SC is expressed as 
\begin{equation}\label{equ:G}
	\mathbf{G}[m]=\sum_{l=1}^L{\alpha _l\mathbf{a}_{\mathrm{R},m}( \tilde{\psi}_l ) \mathbf{a}_{\mathrm{T},m}^{\mathsf{H}}\left( \tilde{\varphi}_l \right) e^{-\jmath2\pi \tau _lf_m}},
\end{equation}
where $L$ represents the number of propagation paths between BS and RIS, $\alpha _{l}$, $\tau _{l}$, $\tilde{\psi}_l$ and $\tilde{\varphi}_l$ represent the complex gain, the time delay, the physical DoA at BS and DoD at RIS for the $l$-th path, respectively. Besides, $f_m$ denotes the frequency of the $m$-th SC, which is given by 
\begin{equation}\label{equ:f_m}
	f_m=f_c+\frac{B}{M}\left(m-1-\frac{M-1}{2}\right),\forall m=1,2, \dots, M,
\end{equation}
where $f_c$ is the central carrier frequency. Besides, $\mathbf{a}_{\mathrm{T},m}\in \mathbb{C} ^{N_{\mathrm{T}}}$ and $\mathbf{a}_{\mathrm{R},m}\in \mathbb{C} ^{N_{\mathrm{R}}}$ are the normalized array response vectors (ARVs) associated with BS and RIS, respectively, given by 
\begin{subequations}
	\begin{equation}\label{equ:at}
		\mathbf{a}_{\mathrm{T},m}(\tilde{u} ) \triangleq \frac{1}{\sqrt{N_{\mathrm{T}}}}e^{-\jmath2\pi \sin \left( \tilde{u}\right) \mathbf{n}_td_{\mathrm{BS}}/\lambda_m},
	\end{equation}
	\begin{equation}\label{equ:ar}
		\mathbf{a}_{\mathrm{R},m}(\tilde{v}) \triangleq \frac{1}{\sqrt{N_{\mathrm{R}}}}e^{-\jmath2\pi \sin \left( \tilde{v}  \right) \mathbf{n}_rd^{\mathrm{RIS}}/\lambda_m},
	\end{equation}
\end{subequations}
where $\lambda_m=c/f_m$ is the wavelength of the $m$-th SC. $d_{\mathrm{BS}}$ and $d^{\mathrm{RIS}}$ denote the antenna spacing of BS and RIS, respectively. Besides, $\mathbf{n}_t=[ 0,1,\dots ,N_{\mathrm{T}}-1 ]^{\mathsf{T}}$ and $\mathbf{n}_r=\left[ 0,1,\dots ,N_{\mathrm{R}}-1 \right]^{\mathsf{T}}$, respectively. By defining the angle vectors $\tilde{\boldsymbol{\psi}}=[ \tilde{\psi}_1,\cdots ,\tilde{\psi}_L ]$ and $\tilde{\boldsymbol{\varphi}}=\left[ \tilde{\varphi}_1,\cdots ,\tilde{\varphi}_L \right] $, $\mathbf{G}[m]$ can be rewritten as 
\begin{equation}\label{equ:VAD_G}
\mathbf{G}[m]=\mathbf{A}_{\mathrm{R},m}( \tilde{\boldsymbol{\psi}} ) \mathbf{\Sigma }_m\mathbf{A}_{\mathrm{T},m}^{\mathsf{H}}\left( \tilde{\boldsymbol{\varphi}} \right),
\end{equation}
where $\mathbf{A}_{\mathrm{R},m}( \tilde{\boldsymbol{\psi}}) =[ \mathbf{a}_{\mathrm{R},m}(\tilde{\psi}_1),\cdots ,\mathbf{a}_{\mathrm{R},m}(\tilde{\psi}_L) ]$ and $\mathbf{A}_{\mathrm{T},m}\left( \tilde{\boldsymbol{\varphi}} \right) =\left[ \mathbf{a}_{\mathrm{T},m}(\tilde{\varphi}_1),\cdots ,\mathbf{a}_{\mathrm{T},m}(\tilde{\varphi}_L) \right]$ denote the array response matrix (ARM) of the DoDs at BS and DoAs at RIS, respectively. $\mathbf{\Sigma }_m=\mathrm{Diag}\left( \alpha _1e^{-\jmath2\pi \tau _1f_m},\cdots ,\alpha _Le^{-\jmath2\pi \tau _Lf_m} \right) $ is the angle-excluded matrix. 

Similarly, the channel $\mathbf{h}_k[m]  \in \mathbb{C}^{1 \times N_{\mathrm{R}}}$ between RIS and $k$-th UE at the $m$-th SC  can be formulated as 
\begin{equation}\label{equ:u}
\mathbf{h}_k[m]=\sum_{j=1}^{J_k}{\beta _{k,j}\mathbf{a}_{\mathrm{R},m}^{\mathsf{H}}( \tilde{\vartheta}_{k,j}) e^{-\jmath2\pi \mu _{k,j}f_m}},
\end{equation}
where $J_k$ represents the number of propagation paths between RIS and $k$-th UE, $\beta _{k,j}$, $\mu _{k,j}$ and $\tilde{\vartheta}_{k,j}$ represent the complex gain, the time delay and the DoD at RIS for the $j$-th path, respectively. By defining $\tilde{\boldsymbol{\vartheta}}_k=[ \tilde{\vartheta}_{k,1},\cdots ,\tilde{\vartheta}_{k,J_k} ]$, $\mathbf{h}_k[m]$ can be rewritten as
\begin{equation}\label{equ:VAD_u}
	\mathbf{h}_k[m]=\boldsymbol{\beta }_{k,m}^{\mathsf{T}}\mathbf{A}_{\mathrm{R},m}^{\mathsf{H}}( \tilde{\boldsymbol{\vartheta}}_k),
\end{equation}
where $\boldsymbol{\beta }_{k,m}\!=\!\left[ \beta _{k,1}e^{-\jmath2\pi \mu _{k,1}f_m},\cdots\! ,\beta _{k,J_k}e^{-\jmath2\pi \mu _{k,J_k}f_m} \right] ^{\mathsf{T}} $ is the angle-excluded vector for the $k$-th UE. $\mathbf{A}_{\mathrm{R},m}( \tilde{\boldsymbol{\vartheta}}_k )=[ \mathbf{a}_{\mathrm{R},m}(\tilde{\vartheta}_{k,1}),\cdots ,\mathbf{a}_{\mathrm{R},m}(\tilde{\vartheta}_{k,J_k}) ] $ denotes the ARM of the DoDs at RIS. 
Substituting \eqref{equ:VAD_G} and \eqref{equ:VAD_u} into \eqref{equ:CasCh}, the cascaded channel $\mathbf{H}_{k}^{\mathrm{cas}}[m]$ can be rewritten as
\begin{equation}\label{equ:VAD_H}
			\mathbf{H}_{k}^{\mathrm{cas}}[m]=( \mathbf{A}_{\mathrm{R},m}^{*}( \tilde{\boldsymbol{\vartheta}}_k ) \!\bullet\! \mathbf{A}_{\mathrm{R},m}( \tilde{\boldsymbol{\psi}} ) )( \boldsymbol{\beta }_{k,m}\!\otimes \!\mathbf{\Sigma }_m ) \mathbf{A}_{\mathrm{T},m}^{\mathsf{H}}( \tilde{\boldsymbol{\varphi}} ).
\end{equation}


\subsection{Beam Split Effect}
According to \eqref{equ:G} and \eqref{equ:u}, THz channel varies across different SCs due to the beam split effect\footnote{The beam split effect can been seen as a severe version of beam squint effect in mmWave band \cite{su2023wideband}.}. This phenomenon is caused by the ultra-wide bandwidth and extensive antenna arrays and reflecting elements in RIS-assisted wideband THz systems. The beams generated by the frequency-independent analog beamforming vector and then reflected by the frequency-independent RIS phase shift matrix would split into totally distinct spatial directions.
Typically, antenna spacings are set to half the wavelength at the central frequency $f_c$, i.e., $d_\mathrm{BS}=d_\mathrm{RIS}=\lambda_c/2$, where $\lambda_c = c/f_c$ is the wavelength of the central frequency \cite{dai2022delay}. For simplicity, we define the variables $u \triangleq \sin\tilde{u} \in [-1,1]$ and $v\triangleq\sin\tilde{v}\in [-1,1]$ to represent the physical directions. Substituting the half-wavelength configuration into \eqref{equ:at} and \eqref{equ:ar}, the ARVs can be rewritten as
\begin{subequations}\label{equ:bs_arv}
	\begin{equation}
		\mathbf{a}_{\mathrm{T},m}(u )=\frac{1}{\sqrt{N_{\mathrm{T}}}}e^{-\jmath\pi \eta _m u \mathbf{n}_t},
	\end{equation}
	\begin{equation}
		\mathbf{a}_{\mathrm{R},m}(v)=\frac{1}{\sqrt{N_{\mathrm{R}}}}e^{-\jmath\pi \eta _m v \mathbf{n}_r},
	\end{equation}
\end{subequations}
where $\eta_m\triangleq f_m / f_c$ represents the relative frequency.
As shown in \eqref{equ:bs_arv}, the spatial directions can be represented by $u_m=\eta_mu$ and $v_m=\eta_mv$.
\change{In the lower frequency bands, such as mmWave, the SC frequencies are relatively closer, resulting in only minor deviations of $\eta_m$ from one. Consequently, beam squint remains limited, and the ARVs exhibit only slight variations across different SCs \cite{dai2022delay,wu2023parametric}. In contrast, the ultra-large bandwidth in THz band leads to a significantly larger deviation in $\eta_m$ compared to mmWave, making the discrepancy between physical and spatial directions become pronounced and non-negligible.}

In wideband RIS-assisted THz systems, beam split effect occurs at both BS and RIS, resulting in severe performance degradation in channel estimation algorithms \cite{su2023wideband}. Therefore, it is crucial to develop an effective channel estimation scheme to acquire accurate CSI under beam split effect. 

\subsection{Channel Estimation Protocol}
Traditional channel estimation algorithms require substantial modifications to the dictionaries, as conventional designs fail to capture the additional angular information introduced by the beam split effect \cite{elbir2023bsa}. Furthermore, the beam split effect disperses the channel of each SC into different spatial directions, thereby necessitating well-designed channel estimation algorithms that operate independently on each SC, which significantly increases the computational~complexity.

To address these challenges, we propose the following low-complexity channel estimation protocol, which is divided into two main phases, as depicted in Fig. \ref{fig:frame_structure}. Phase I mainly focuses on estimating the full CSI for two selected SCs, which includes estimation of cascaded channels, the DoDs at BS, and the coupled angles at RIS. Specifically, we further divide this phase into two steps. In Step 1, we employ sparse recovery algorithms to estimate the cascaded channel for the selected two SCs. In Step 2, we focus on the heuristic algorithms for common angles estimation. Given the full CSI estimation from Phase I, we then employ a simple LS algorithm to estimate the channels for the remaining SCs in Phase II.

\section{Cascaded Channel Estimation for two SCs} \label{sec:full_CSI_estimation}
In this section, we focus on the estimation of the cascaded channel for two specific SCs. We begin by presenting the angular-domain sparse representation of RIS-assisted THz channels. Next, we introduce the simplified dictionary, referred to as the CBS dictionary. Based on this simplified dictionary, we reformulate the channel estimation problem as a sparse recovery problem. Finally, we provide a comprehensive analysis of existing sparse recovery algorithms.
\subsection{Angular Domain Sparse Representation} 
The sparse angular domain representation of the cascaded channel $\mathbf{H}_{k}^{\mathrm{cas}}[m]$ is given by  
\begin{equation}\label{equ:Dic_H}
\mathbf{H}_{k}^{\mathrm{cas}}[m]=\tilde{\mathbf{\Xi}}_{\mathrm{R},m}\bar{\mathbf{X}}_{k,m}\mathbf{C}_{\mathrm{T},m}^{\mathsf{H}},
\end{equation}
where $\tilde{\mathbf{\Xi}}_{\mathrm{R},m}\triangleq \mathbf{C}_{\mathrm{R},m}^{*}\bullet \mathbf{C}_{\mathrm{R},m} \in \mathbb{C}^{N_{\mathrm{R}} \times Q_{\mathrm{R}}^2}$ is the coupled dictionary. Moreover, $\bar{\mathbf{X}}_{k,m}\triangleq \bar{\boldsymbol{\beta}}_{k,m}\otimes \bar{\mathbf{\Sigma}}_m\in \mathbb{C}^{Q_{\mathrm{R}}^2\times Q_{\mathrm{T}}}
$, where $\bar{\mathbf{\Sigma}}_m\in \mathbb{C}^{Q_{\mathrm{R}}\times Q_{\mathrm{T}}}$ and $\bar{\boldsymbol{\beta}}_{k,m}\in \mathbb{C}^{Q_{\mathrm{R}}}$ are the sparse formulation with $L$ and $J_k$ non-zero entries corresponding to the angle-excluded matrix of BS-RIS channel and the angle-excluded vector of RIS-UE channel, respectively. Consequently, $\bar{\mathbf{X}}_{k,m}$ is an $L\times J_k$-sparse matrix with its non-zero entries corresponding to the set $\{\!\bar{x}_{k\!,m\!,l\!,j}|\bar{x}_{k\!,m\!,l\!,j}\!\triangleq\! \alpha _l\beta _{k,\!j}e^{\!-\jmath 2\pi \!\left( \!\tau _l\!+\!\mu _{k,j} \!\right) \!f_m}\!,\!l\!=\!1,\!\dots \!,\!L,\!j\!\!=\!1,\!\dots\! ,\!J_k \!\} $. Additionally, the overcomplete dictionaries of the BS and RIS directions can be expressed as
\begin{subequations}\label{equ:Codebook}
	\begin{equation}
		\mathbf{C}_{\mathrm{T},m}=\left[ \mathbf{a}_{\mathrm{T},m}\left( \bar{\vartheta}_1 \right) ,\cdots ,\mathbf{a}_{\mathrm{T},m}\left( \bar{\vartheta}_{Q_{\mathrm{T}}} \right) \right],
	\end{equation}
	\begin{equation}
		\mathbf{C}_{\mathrm{R},m}=\left[ \mathbf{a}_{\mathrm{R},m}\left( \bar{\vartheta}_1 \right) ,\cdots ,\mathbf{a}_{\mathrm{R},m}\left( \bar{\vartheta}_{Q_{\mathrm{R}}} \right) \right],
	\end{equation}
\end{subequations}
where $Q_{\mathrm{T}} \gg  N_{\mathrm{T}}$ and $Q_{\mathrm{R}} \gg N_{\mathrm{R}}$ are the quantization level of the BS and RIS directions, respectively. Moreover, $\bar{\vartheta}_q=\delta \cdot q-(Q+1)/{Q} $ with $Q\in\{Q_{\mathrm{T}},Q_{\mathrm{R}}\}$ denotes the spatial direction sample, $\delta = 2/Q$ is the interval between two adjacent samples. 

\subsection{Simplified Dictionary}
Note that the coupled dictionary $\tilde{\mathbf{\Xi}}_{\mathrm{R},m}$ contains only $Q_\mathrm{R}$ unique columns which are exactly the first $Q_\mathrm{R}$ columns in narrowband systems\cite{wang2020compressed}. However, for wideband THz channel with beam split effect, this simplification is no longer suitable due to the frequency-dependent nature of the ARVs.

\begin{proposition}
The dictionary $\tilde{\mathbf{\Xi}}_{\mathrm{R},m}$ contains $Q_B = 2Q_R - 1$ unique columns, and the coupled dictionary can be simplified by including only the first and the last $Q_\mathrm{R}$ columns of the coupled dictionary $\tilde{\mathbf{\Xi}}_{\mathrm{R},m}$, which can be expressed as
	\begin{equation}\label{equ:Xi_simp}
		\mathbf{\Xi }_{\mathrm{R},m}\triangleq [ \tilde{\mathbf{\Xi}}_{\mathrm{R},m}\left( Q_{\mathrm{R}}^{2}-Q_{\mathrm{R}}+1:Q_{\mathrm{R}}^{2} \right) ;\tilde{\mathbf{\Xi}}_{\mathrm{R},m}\left( 2:Q_{\mathrm{R}} \right) ],
	\end{equation}
	where $\mathbf{\Xi}_{\mathrm{R},m}\in\mathbb{C}^{N_{\mathrm{R}}\times Q_\mathrm{B}}$ is the simplified dictionary, and the $q$-th column where $q = 1,\dots Q_\mathrm{B}$ of $\mathbf{\Xi}_{\mathrm{R},m}$ can be formulated~as 
	\begin{equation}\label{equ:xi_q}
		\boldsymbol{\xi} _{\mathrm{R},m,q}=\mathbf{a}_{\mathrm{R},m}\left( 2\frac{q-Q_{\mathrm{R}}}{Q_{\mathrm{R}}} \right).
	\end{equation}
\end{proposition}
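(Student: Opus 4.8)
The plan is to analyze the column structure of the coupled dictionary $\tilde{\mathbf{\Xi}}_{\mathrm{R},m} = \mathbf{C}_{\mathrm{R},m}^{*} \bullet \mathbf{C}_{\mathrm{R},m}$ by writing out the $(p,q)$-th column explicitly. Recalling the transposed Khatri-Rao product, the column of $\tilde{\mathbf{\Xi}}_{\mathrm{R},m}$ indexed by the pair $(p,q)$ with $p,q \in \{1,\dots,Q_{\mathrm{R}}\}$ equals $\mathbf{a}_{\mathrm{R},m}^{*}(\bar{\vartheta}_p) \odot \mathbf{a}_{\mathrm{R},m}(\bar{\vartheta}_q)$ entrywise. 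Using the half-wavelength ARV expression \eqref{equ:ar} (equivalently \eqref{equ:bs_arv}), this is $\tfrac{1}{N_{\mathrm{R}}} e^{\jmath \pi \eta_m \bar{\vartheta}_p \mathbf{n}_r} \odot e^{-\jmath \pi \eta_m \bar{\vartheta}_q \mathbf{n}_r} = \tfrac{1}{N_{\mathrm{R}}} e^{-\jmath \pi \eta_m (\bar{\vartheta}_q - \bar{\vartheta}_p) \mathbf{n}_r}$, which depends on $p$ and $q$ only through the difference $\bar{\vartheta}_q - \bar{\vartheta}_p$. This is the crux: the coupled column is fully determined by a single difference variable.

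Next I would count the distinct values of this difference. Since $\bar{\vartheta}_q = \delta q - (Q_{\mathrm{R}}+1)/Q_{\mathrm{R}}$ with $\delta = 2/Q_{\mathrm{R}}$, we get $\bar{\vartheta}_q - \bar{\vartheta}_p = \delta(q-p) = \tfrac{2}{Q_{\mathrm{R}}}(q-p)$, and $q-p$ ranges over the integers $\{-(Q_{\mathrm{R}}-1), \dots, Q_{\mathrm{R}}-1\}$, which has exactly $2Q_{\mathrm{R}}-1 = Q_B$ elements. Hence $\tilde{\mathbf{\Xi}}_{\mathrm{R},m}$ has at most $Q_B$ distinct columns; distinctness (that different differences give different columns, at least generically in $\eta_m$ and for $N_{\mathrm{R}} \ge 2$) follows because the map $x \mapsto e^{-\jmath \pi \eta_m x \mathbf{n}_r}$ is injective in its first nonconstant entry over the relevant range of $x \in (-2,2)$. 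I would then identify which index pairs realize each difference: the difference $q-p = k$ for $k \ge 0$ is first attained at $(p,q) = (1, k+1)$, and for $k < 0$ at $(p,q) = (-k+1, 1)$; translating these pairs through the column-stacking convention of $\bullet$ shows that the extreme differences $q-p = Q_{\mathrm{R}}-1$ down to $q-p = 0$ correspond exactly to columns $Q_{\mathrm{R}}^2 - Q_{\mathrm{R}}+1$ through $Q_{\mathrm{R}}^2$, and the differences $q-p = -1$ down to $-(Q_{\mathrm{R}}-1)$ correspond to a suitable subset mapping onto columns $2$ through $Q_{\mathrm{R}}$; this is the content of the selection \eqref{equ:Xi_simp}. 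Finally, substituting $\bar{\vartheta}_q - \bar{\vartheta}_p = \tfrac{2}{Q_{\mathrm{R}}}(q-p)$ and re-indexing the selected difference by $q - Q_{\mathrm{R}}$ (so that the difference runs symmetrically) yields $\boldsymbol{\xi}_{\mathrm{R},m,q} = \mathbf{a}_{\mathrm{R},m}\!\left(2\tfrac{q-Q_{\mathrm{R}}}{Q_{\mathrm{R}}}\right)$, which is \eqref{equ:xi_q}.

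The main obstacle I anticipate is purely bookkeeping: tracking the exact linear index that the column-stacking (vectorization) convention underlying the transposed Khatri-Rao product assigns to each pair $(p,q)$, and verifying that the two contiguous blocks named in \eqref{equ:Xi_simp}, namely columns $Q_{\mathrm{R}}^2-Q_{\mathrm{R}}+1:Q_{\mathrm{R}}^2$ and columns $2:Q_{\mathrm{R}}$, together hit every one of the $Q_B$ distinct difference values exactly once. I would handle this by fixing the convention that the $(p,q)$ pair maps to column $(q-1)Q_{\mathrm{R}} + p$ (or whichever convention the paper uses), checking that within the last block $p$ varies while $q = Q_{\mathrm{R}}$ is fixed so the difference $q - p$ sweeps $0,1,\dots,Q_{\mathrm{R}}-1$, and that within the block $2:Q_{\mathrm{R}}$ we fix $p = 1$ so the difference sweeps $1,\dots,Q_{\mathrm{R}}-1$ but with the conjugation placing these as negative differences — the only subtlety being to confirm the sign so that the union is the full symmetric range $\{-(Q_{\mathrm{R}}-1),\dots,Q_{\mathrm{R}}-1\}$ and no value is double-counted. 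Everything else — the difference-only dependence and the count $2Q_{\mathrm{R}}-1$ — is immediate once the ARV product is expanded.
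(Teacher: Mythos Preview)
Your proposal is correct and follows essentially the same route as the paper's proof: both expand the $(q_1,q_2)$-th column of $\mathbf{C}_{\mathrm{R},m}^{*}\bullet\mathbf{C}_{\mathrm{R},m}$ as $\mathbf{a}_{\mathrm{R},m}(\delta(q_1-q_2))$, observe that the difference $q_1-q_2$ ranges over $\{1-Q_{\mathrm{R}},\dots,Q_{\mathrm{R}}-1\}$ yielding $2Q_{\mathrm{R}}-1$ unique columns, and then verify that the last $Q_{\mathrm{R}}$ columns ($q_2=Q_{\mathrm{R}}$ fixed) together with columns $2{:}Q_{\mathrm{R}}$ ($q_2=1$ fixed) exhaust all differences. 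The paper presents the index bookkeeping via a cyclic-shift diagram rather than the explicit case analysis you outline, but the content is identical; the indexing convention you flag as the one subtlety is $j=(q_2-1)Q_{\mathrm{R}}+q_1$ with the conjugate index $q_2$ slow, which resolves the sign exactly as needed for \eqref{equ:xi_q}.
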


\begin{proof}
    The proof is provided in Appendix~\ref{appendix:proof_1}.
\end{proof}


\subsection{Sparse Recovery Reformulation} \label{subsec:sparse_reformulate} By substituting the simplified CBS dictionary $\mathbf{\Xi}_{\mathrm{R},m}$ into \eqref{equ:Dic_H}, the RIS-assisted cascaded channel can be rewritten as
\begin{equation}
\mathbf{H}_{k}^{\mathrm{cas}}[m]=\mathbf{\Xi}_{\mathrm{R},m}\mathbf{X}_{k,m}\mathbf{C}_{\mathrm{T},m}^{\mathsf{H}}.
\end{equation}
Recalling \eqref{equ:Xi_simp} that the simplified dictionary $\mathbf{\Xi}_{\mathrm{R},m}$ is constructed by removing duplicate columns from the original dictionary $\tilde{\mathbf{\Xi}}_{\mathrm{R},m}$, the corresponding angle excluded matrix $\mathbf{X}_{k,m}\in\mathbb{C}^{Q_\mathrm{B}\times Q_{\mathrm{T}}}$ is the merged version of $\bar{\mathbf{X}}_{k,m}\in\mathbb{C}^{Q_\mathrm{B}\times Q_{\mathrm{T}}}$. Specifically,
each row of $\mathbf{X}_{k,m}$ is a superposition of row subsets in $\bar{\mathbf{X}}_{k,m}$, i.e., $\mathbf{X}_{k,m}(i,:)=\sum_{j\in \mathcal{P} _i}{\bar{\mathbf{X}}_{k,m}}(j,:)$. $\mathcal{P}_i$ denotes the set of indices associated with the columns in $\tilde{\mathbf{\Xi}}_{\mathrm{R},m}$ that match the $i$-th column of $\mathbf{\Xi}_{\mathrm{R},m}$. In the cascaded channel, the equivalent angles are the combination of DoDs at BS and coupled angles at RIS, which may overlap and result in fewer effective paths. Therefore, there are ${L}_{\mathrm{tot}}\leq L\times J_k$ non-zero entries in $\mathbf{X}_{k,m}$. Employing the Kronecker product property, the vectorized RIS-assisted cascaded channel $\mathbf{h}_{k}^{\mathrm{cas}}[m] \in \mathbb{C}^{N_{\mathrm{T}}N_{\mathrm{R}} \times Q_\mathrm{B}Q_{\mathrm{T}}}$ can be given by
\begin{equation} \label{equ:Dic_vec_h}
		\mathbf{h}_{k}^{\mathrm{cas}}[m]=\mathbf{\Pi }_m\mathbf{x}_{k,m},
\end{equation}
where $\mathbf{x}_{k,m} = \mathbf{X}_{k,m}(:)\in\mathbb{C}^{Q_\mathrm{B}Q_{\mathrm{T}}}$ represents $\mathbf{X}_{k,m}$ in vector form, $\mathbf{\Pi }_m \triangleq \left( \mathbf{C}_{\mathrm{T},m}^{*}\otimes \mathbf{\Xi }_{\mathrm{R},m} \right) \in \mathbb{C}^{N_{\mathrm{T}}N_{\mathrm{R}} \times Q_\mathrm{B}Q_{\mathrm{T}}}$ denotes the total dictionary. 

By substituting \eqref{equ:Dic_vec_h} into \eqref{equ:y} and vectorizing, the sparse system formulation can be expressed as
\begin{equation} \label{equ:Dic_vec_y}
		\mathbf{y}_{k,t}\left[ m \right] =\left( \mathbf{R}_{t}^{\mathsf{T}}\left[ m \right] \otimes \boldsymbol{\theta }_t \right) \mathbf{\Pi }_m\mathbf{x}_{k,m}+\mathbf {n}_{k,t}[m],
\end{equation}
where $\mathbf{y}_{k,t}\left[ m \right]\in\mathbb{C}^{P}$ is the vector form of $\tilde{\mathbf{y}}_{k,t}[m]$. Stacking the measurement vectors collected at $T$ successive subframes $\mathbf{y}_k\left[ m \right] =[ \mathbf{y}_{k,1}^{\mathsf{T}}\left[ m \right] ,\cdots ,\mathbf{y}_{k,T}^{\mathsf{T}}\left[ m \right] ] ^{\mathsf{T}}\in \mathbb{C} ^{PT}$, the received signal can be expressed as
\begin{equation}\label{equ:c_sparse_problem}
\mathbf{y}_k[m]=\mathbf{\Psi }_m\mathbf{x}_{k,m}+\mathbf{n}_k\left[ m \right],
\end{equation}
where $\mathbf{\Psi }_m=\bar{\mathbf{R}}_m\mathbf{\Pi }_m $ and 
\begin{equation}
\bar{\mathbf{R}}_m\triangleq \left[ \begin{array}{c}
	\mathbf{R}_{1}^{\mathsf{T}}[m]\otimes \boldsymbol{\theta }_1\\
	\vdots\\
	\mathbf{R}_{T}^{\mathsf{T}}[m]\otimes \boldsymbol{\theta }_T\\
\end{array} \right].
\end{equation}
Note that the observation matrix $\mathbf{\Psi }_m\in\mathbb{C}^{PT\times Q_\mathrm{B}Q_{\mathrm{T}}}$ is a sparse matrix. 

To facilitate the application of following sparse recovery algorithms and improve numerical stability \cite{bellili2019generalized}, we transform the complex form in \eqref{equ:c_sparse_problem} to the equivalent real form
\begin{equation}\label{equ:c_r_transform}
\left[ \begin{array}{c}
	\!\!\!\Re \{\mathbf{y}_k\left[ m \right] \}\!\!\!\\
	\!\!\!\Im \{\mathbf{y}_k\left[ m \right] \}\!\!\!\\
\end{array} \right] \!\!\!=\!\!\!\left[ \begin{matrix}
	\Re \{\mathbf{\Psi }_m\}&\!\!\!\!-\Im \{\mathbf{\Psi }_m\}\\
	\Im \{\mathbf{\Psi }_m\}&\!\!\!\!\Re \{\mathbf{\Psi }_m\}\\
\end{matrix} \right]\!\! \left[ \begin{array}{c}
	\!\!\!\!\Re \{\mathbf{x}_{k,m}\}\!\!\!\!\\
	\!\!\!\!\Im \{\mathbf{x}_{k,m}\}\!\!\!\!\\
\end{array} \right] \!+\!\left[ \begin{array}{c}
	\!\!\!\Re \{\mathbf{n}_k\left[ m \right] \} \!\!\!\!\!\\
	\!\!\!\Im \{\mathbf{n}_k\left[ m \right] \}\!\!\!\!\!\\
\end{array} \right]\!\!,
\end{equation}
where $\Re \{\cdot\}$ and $\Im \{\cdot\}$ represent the real and imaginary parts, respectively. By systematically assigning symbols to each matrix in \eqref{equ:c_r_transform}, the sparse recovery problem can be reformulated as
\begin{equation}\label{equ:r_sparse_problem}
	\boldsymbol{y}=\mathbf{\Psi }\boldsymbol{x}+\boldsymbol{n}.
\end{equation}
For simplicity, we adopt a simplified subscript notation by omitting the indices for the matrices and vectors. Let $T_{\mathrm{tot}}\triangleq 2 PT$ and $Q_{\mathrm{tot}}\triangleq 2Q_\mathrm{B}Q_{\mathrm{T}}$. Define $\boldsymbol{y} \in \mathbb{C}^{T_{\mathrm{tot}}}$, $\mathbf{\Psi } \in \mathbb{C}^{T_{\mathrm{tot}}\times Q_{\mathrm{tot}}}$ and $\boldsymbol{x} \in \mathbb{C}^{Q_{\mathrm{tot}}}$ as the measurement vector, the observation matrix and the sparse representation vector to be estimated with sparsity ${L}_{\mathrm{tot}}$, respectively. This sparse recovery formulation forms the basis for the subsequent channel estimation process. In what follows, we analyze different sparse recovery algorithms and motivate the selection of GAMP as the preferred approach for the cascaded channel estimation.
\subsection{Sparse Recovery Algorithm Analysis} 

Selecting an appropriate sparse recovery algorithm is critical for channel estimation in RIS-assisted THz systems due to the unique sparsity characteristics and error propagation effects. This section evaluates three widely used algorithms: LS, OMP, and GAMP.

\textbf{a) LS:} The LS algorithm solves the linear regression problem by minimizing the residual sum of squares. However, it lacks sparsity constraints, which prevents it from effectively capturing the sparse structure of the channel. As a result, LS is unsuitable for the initial full CSI estimation phase, where a sparse representation is required.

\textbf{b) OMP:} OMP is a greedy algorithm that reconstructs the signal by iteratively selecting dictionary atoms that best match the current residual. Since it follows a ``hard estimation" approach, it makes definitive decisions at each iteration without retaining probabilistic information. This approach makes OMP highly sensitive to noise and prone to errors in angle estimation. Without a probabilistic model, it cannot compensate for these errors in later processing, which reduces the robustness of the overall estimation framework.

\textbf{c) GAMP:} Unlike OMP, GAMP adopts a probabilistic message-passing framework that maintains a distribution over signal values throughout the estimation process. This ``soft estimation" mechanism allows GAMP to incorporate uncertainty in the estimation process and mitigate the impact of noise. Instead of selecting a single best estimate at each step, GAMP provides a probability distribution for each angle estimate, which improves the reliability of angular information. This characteristic is particularly advantageous in THz channel estimation, where accurate CSI acquisition directly affects overall system performance.

In summary, GAMP provides superior accuracy and robustness, which makes it the preferred choice for initial full CSI estimation\footnote{A detailed investigation of the CBS-based GAMP scheme is provided in our previous work \cite{su2024channel}.}. Although GAMP can estimate the full channel across all subcarriers, performing this estimation for every subcarrier individually or jointly would lead to substantial computational complexity. To balance estimation accuracy and computational efficiency, we apply GAMP to only two selected subcarriers. The estimated angular parameters are then used to reconstruct the CSI for the remaining SCs.


\section{Common Angle Estimation}\label{sec:angle_estimation}
Utilizing the selected GAMP algorithm from Section \ref{sec:full_CSI_estimation}, we can obtain the sparse signal vector $\boldsymbol{x}$, denoted as $\hat{\mathbf{x}}_{k,m}=\left[ x_1,\cdots ,x_{Q_{\mathrm{tot}}} \right] ^{\mathsf{T}}$. By substituting $\hat{\mathbf{x}}_{k,m}$ into \eqref{equ:Dic_vec_h}, the estimated cascaded channel can be expressed as
\begin{equation}\label{equ:h_cas_hat}
	\hat{\mathbf{h}}_{k}^{\mathrm{cas}}[m]=\mathbf{\Pi }_m\hat{\mathbf{x}}_{k,m}.
\end{equation}

This method implies that we could perform the estimation for each SC independently to obtain the channel estimates for all SCs. However, this requires implementing the estimation algorithm $M$ times, which would result in an unacceptably high computational complexity. While the beam split effect complicates the channel estimation process, it is noticed in \eqref{equ:bs_arv} that different SCs share common physical directions, with the only different term $\eta_m$ \cite{wu2023parametric}. This indicates that the physical angles remain constant across different SCs, which we refer to as the common angles. Inspired by this, we perform the cascaded channel estimation on only a small subset of SCs to acquire the common physical angles. We select $\hat{\mathbf{h}}_{k}^{\mathrm{cas}}[m]$ for $m\in \{1,M/2\}$, which will be explained in Section \ref{subsec:RIS_angle}. Given this estimation, we can further estimate the common directions, i.e., the DoDs at the BS and the coupled directions at the RIS.

\subsection{Common DoDs Estimation at BS}\label{subsec:BS_angle}
Given the estimated cascaded channel $\hat{\mathbf{h}}_{k}^{\mathrm{cas}}[m]=\mathbf{\Pi }_m\hat{\mathbf{x}}_{k,m},\bar{m} \in \left\{ 1,M/2 \right\}$, we propose an energy maximum (EnM) algorithm to estimate the common DoDs at BS\footnote{The two SCs at positions 1 and $M/2$ are selected due to the requirement of coupled angle estimation at RIS in Section \ref{subsec:RIS_angle}.}. Notice that while the EnM algorithm can operate on any single SC, leveraging CSI from multiple SCs improves estimation accuracy \cite{ma2021wideband}. Therefore, we accumulate the CSI obtained from cascaded channel estimation in Step 1. It is noticed that the $q$-th ($q=1,\dots,Q_{\mathrm{tot}}$) entry in $\hat{\mathbf{x}}_{k,\bar{m}}$ corresponds to the $q$-th column in $\mathbf{\Pi }_{\bar{m}}(:,q)$. This means that given the accurate estimation of $\hat{\mathbf{x}}$, we only need to determine its non-zero positions corresponding to the columns in the matrix $\mathbf{\Pi }_{\bar{m}}$. Since $\mathbf{\Pi }_{\bar{m}} \triangleq \left( \mathbf{C}_{\mathrm{T},\bar{m}}^{*}\otimes \mathbf{\Xi }_{\mathrm{R},\bar{m}} \right)$ is in one-to-one correspondence with the ARVs, this relationship allows us to directly obtain the DoDs at BS. However, inevitable noise and estimation errors can cause power leakage in the estimated $\hat{\mathbf{x}}_{k,\bar{m}}$, which will lead to more than ${L}_{\mathrm{tot}}$ non-zero entries. To address this, we propose a heuristic method to determine the accurate non-zero positions, referred to as common physical support, and calculate the corresponding physical DoDs at the BS, as summarized in Algorithm \ref{Alg:energy_maximum}.
\begin{algorithm}[t]
	\caption{EnM Based DoDs Estimation at BS}
	\label{Alg:energy_maximum}
	\begin{algorithmic}[1]
		\REQUIRE Angle-excluded matrix $\mathbf{x}_{k,\bar{m}},\bar{m} \in \left\{ 1,M/2 \right\}$, the path number $L$ of BS-RIS channel and the quantization level of overcomplete dictionaries $Q_{\mathrm{R}}$ and $Q_{\mathrm{T}}$.
		\STATE $\boldsymbol{\epsilon }=\left[ \epsilon _1,\cdots ,\epsilon _{Q_{\mathrm{T}}} \right]$ and $\bar{\boldsymbol{\varphi}}=-1+\frac{1+2\boldsymbol{q}}{Q_{\mathrm{T}}},\boldsymbol{q}=1,\dots ,Q_{\mathrm{T}}  $. 
			\STATE $\tilde{\boldsymbol{\epsilon}}_{\bar{m}}=\mathrm{normalize}\left(\left[ \left| x_{k,\bar{m},1} \right|,\cdots ,\left| x_{k,\bar{m},Q_{\mathrm{tot}}} \right| \right] \right)$.
		
		\FOR {$q \in \left\{1,2,\dots, Q_{\mathrm{T}} \right\}$}
			\STATE	$\epsilon _q=\sum_{\bar{m}\in \left\{ 1,\frac{M}{2} \right\}}{\left\| \tilde{\boldsymbol{\epsilon}}_{\bar{m}}\left( \left( \left( q-1 \right) Q_{\mathrm{B}}+1 \right) :qQ_{\mathrm{B}} \right) \right\| _2}$.
		\ENDFOR
		\STATE $\mathcal{I}^{\mathrm{BS}}=\left\{ q|\mathop {\mathrm{maxk}} \limits_{q}\left( \boldsymbol{\epsilon },L \right) \right\} $
		\STATE $ \hat{\boldsymbol{\varphi}} =\bar{\boldsymbol{\varphi}}\left( \mathcal{I}^{\mathrm{BS} }\right) $.
		\ENSURE $\mathcal{I}^{\mathrm{BS}}$ and $\left\{ \hat{\varphi _l} \right\} _{l=1}^{L}$.
	\end{algorithmic}
\end{algorithm}

In the proposed method, we first establish the normalized energy basis vector $\tilde{\boldsymbol{\epsilon}}_{\bar{m}}$, where each entry is the normalized value of the corresponding element in $\hat{\mathbf{x}}_{k,\bar{m}}$ (Step $2$). As observed in Section \ref{subsec:sparse_reformulate}, $\mathbf{\Pi }_{\bar{m}}$ can be regarded as a block matrix of $Q_{\mathrm{T}}$ blocks, each containing $Q_{\mathrm{B}}$ columns. Since each block in $\mathbf{\Pi }_{\bar{m}}$ is aligned with a physical DoD at BS, $\tilde{\boldsymbol{\epsilon}}_{\bar{m}}$ can be partitioned in the same way. By accumulating the $\ell_2$-norm for each block, we form the merged-energy vector $\boldsymbol{\epsilon}$ to establish a one-to-one correspondence with the physical DoDs at BS (Step $3-5$). \change{Then, the indices corresponding to the $L$ largest elements in $\boldsymbol{\epsilon}$ will form the common physical support $\mathcal{I}^{\mathrm{BS}}$ (Step $6$), where the function $\mathrm{maxk}\left(\boldsymbol{\epsilon},L \right) $ returns the $L$ largest elements in vector $\boldsymbol{\epsilon}$ .} Finally, the DoDs $\left\{ \hat{\varphi _l} \right\} _{l=1}^{L}$ can be obtained using the knowledge of $\mathcal{I}^{\mathrm{BS}}$ and the sampling rule of the CBS dictionary (Step $7$). 

Fig. \ref{fig:BS_angle} depicts an example of the estimated energy spectrum employing the proposed method. The cascaded channel is assumed to contain $L=3$ BS-RIS paths with the common physical support $\mathcal{I} ^{\mathrm{BS}}=\left\{3,4,24\right\}$ and SNR~$=20 $~dB. The spectrum reveals that the energy is distributed surrounding the estimated DoDs which closely matches the actual directions and can be easily distinguished. This validates the effectiveness of the proposed method.

\begin{figure}[!t]
	\centering
	\includegraphics[width=0.47\textwidth]{./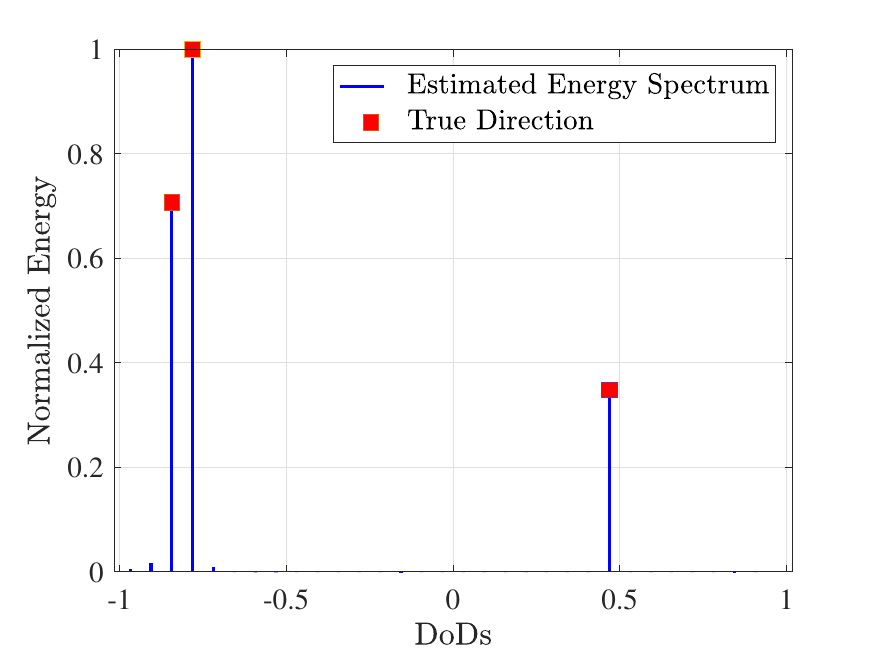}
	\caption{Normalized energy spectrum via EnM algorithm.}
	\label{fig:BS_angle}
\end{figure}
\subsection{Common Coupled Angle Estimation at RIS} \label{subsec:RIS_angle}

With the estimated DoDs $\left\{ \hat{\varphi _l} \right\} _{l=1}^{L}$, the estimated cascaded channel $\hat{\mathbf{H}}_{k}^{\mathrm{cas}}[\bar{m}]$ can be projected onto the subspace of ARM of BS using the pseudo-inverse, which can be given by
\begin{equation}\label{equ:music_u}
	\begin{split}
		\mathbf{U}_{k,\bar{m}}&\triangleq \hat{\mathbf{H}}_{k}^{\mathrm{cas}}[\bar{m}]( \hat{\mathbf{A}}_{N_{\mathrm{T}},\bar{m}}^{\mathsf{H}}\left( \boldsymbol{\varphi } \right) ) ^{\dagger}
		\\
		&=\mathrm{Diag}\left( \mathbf{h}_k[\bar{m}] \right) \mathbf{A}_{\mathrm{R},\bar{m}}\left( \boldsymbol{\psi } \right) \mathbf{\Sigma }_{\bar{m}},
		\end{split}
\end{equation}
where $\mathbf{A}_{N}^{\dagger}\mathbf{A}_N=\mathbf{I}_L$, and $\mathbf{U}_{k,\bar{m}} \triangleq \left[ \mathbf{u}_{k,\bar{m},1},\cdots ,\mathbf{u}_{k,\bar{m},L} \right] \in \mathbb{C} ^{N_{\mathrm{R}}\times L}$ denotes the rest-CSI matrix, which is only related to the DoAs and DoDs at RIS. By substituting \eqref{equ:VAD_G} into \eqref{equ:music_u}, the $l$-th column $\mathbf{u}_{k,\bar{m},l}$ can be expressed as
\begin{equation}\label{equ:music_ul}
	\begin{split}
	\mathbf{u}_{k,\bar{m},l}&=\Sigma _{\bar{m},l}\left( \mathbf{A}_{\mathrm{R},\bar{m}}^{*}\left( \boldsymbol{\vartheta }_k \right) \boldsymbol{\beta }_{k,\bar{m}} \right) \bullet \mathbf{a}_{\mathrm{R},\bar{m}}\left( \psi _l \right) 
	\\
	&=\Sigma _{\bar{m},l}\left[ \mathbf{a}_{\mathrm{R},\bar{m}}\left( \gamma _{k,l,1}\! \right) ,\dots ,\mathbf{a}_{\mathrm{R},\bar{m}}\left( \gamma _{k,l,J_k}\! \right) \right] \!\boldsymbol{\beta }_{k,\bar{m}},
	\end{split}	
\end{equation}
where $\Sigma _{\bar{m},l}$ denotes the $\left(l,l\right)$-th element in $\mathbf{\Sigma }_{\bar{m}}$, and $\gamma _{k,l,j}=\psi _l-\vartheta _{k,j} \in \left[-2,2\right]$ represents the coupled angle for $l=1,\dots ,L$ and $j=1,\dots ,J_k$. 

Define $\mathbf{A}_{\mathrm{R},\bar{m}}\left( \boldsymbol{\gamma }_{k,l} \right) \triangleq\left[ \mathbf{a}_{\mathrm{R},\bar{m}}\left( \gamma _{k,l,1} \right) ,\cdots ,\mathbf{a_{\mathrm{R},\bar{m}}}\left( \gamma _{k,l,J_k} \right) \right] \in \mathbb{C}^{N_{\mathrm{R}} \times J_k} $ as the coupled ARM, \eqref{equ:music_ul} can be rewritten as 
\begin{equation}\label{equ:music_problem}
\mathbf{u}_{k,\bar{m},l}=\mathbf{A}_{\mathrm{R},\bar{m}}\left( \boldsymbol{\gamma }_{k,l,j} \right) \boldsymbol{\chi }_{k,\bar{m},l},
\end{equation}
where $\boldsymbol{\chi }_{k,\bar{m},l}\triangleq \Sigma _{\bar{m},l}\boldsymbol{\beta }_{k,\bar{m}}\in \mathbb{C}^{ J_k}$ denotes the amplitude vector with path gains and phase shifts for each coupled angle. 

The expression in \eqref{equ:music_problem} represents a typical DoA estimation problem, which can be effectively addressed using the MUSIC algorithm due to its high resolution and robustness against noise \cite{kundu1996modified}. However, there are two key challenges in directly applying the MUSIC algorithm. Firstly, the effective angle range for $\boldsymbol{\gamma }_{k,l,j}$, which is the difference between two angles $\psi _l$ and $\theta _{k,j}$, is doubled compared to the classical angle range. Moreover, the relative frequency parameter $\eta_m$ introduced by beam split effect further expands this range. This results in an increased number of potential estimated angles, making it difficult to distinguish the true physical directions. Secondly, the approximate orthogonality of $\boldsymbol{\chi }_{k,\bar{m},l}$ and limited measurements per SC can cause energy leakage into the noise space, further complicating the identification of physical directions. 

To address these challenges, we propose a novel double-sensing MUSIC (DS-MUSIC) algorithm.
Firstly, we combine the information from two selected SCs and apply spatial smoothing preprocessing to the projected rest-CSI matrix $\mathbf{U}_{k,\bar{m}}$. Inspired by the spatial smoothing MUSIC algorithm \cite{liu2015remarks}, we then introduce the proposed DS-MUSIC algorithm to accurately estimate the coupled angles, as described in Algorithm \ref{Alg:music}, where $\mathrm{peaks}\left( \boldsymbol{x},k \right) $ and $\mathrm{mink}\left( \boldsymbol{x},k \right) $ return $k$ largest peaks and $k$ smallest elements of vector $\boldsymbol{x}$, respectively.

We first preprocess the projected signal $\mathbf{u}_{k,\bar{m},l}$ by dividing the RIS array into multiple overlapping subarrays, with each subarray containing $N_\mathrm{sub}$ reflecting elements. There are $N_{\mathrm{a}} = N_{\mathrm{R}} - N_{\mathrm{sub}} + 1$ overlapping subarrays in total. To ensure reliable angle estimation, we enforce the condition $N_{\mathrm{a}} > N_{\mathrm{sub}}$ and $N_{\mathrm{a}}, N_{\mathrm{sub}} > J_k$\footnote{Increasing $N_{\mathrm{a}}$ provides more snapshots for constructing the sample covariance matrix, which improves its estimation accuracy and the stability of the MUSIC algorithm; on the other hand, a larger $N_{\mathrm{sub}}$ improves the angular resolution by increasing the aperture of each subarray. Since $N_{\mathrm{a}} + N_{\mathrm{sub}} - 1 = N_{\mathrm{R}}$ is fixed, these parameters must be jointly chosen to balance spatial resolution and covariance estimation accuracy.}.  The rest-CSI $\mathbf{u}_{k,\bar{m},l,n}$ processed by the $n$-th subarray can be expressed as 
\begin{equation}\label{equ:u_sub}
	\mathbf{u}_{k,\bar{m},l,n}=\mathbf{A}_{\mathrm{R},\bar{m},n}\left( \boldsymbol{\gamma }_{k,l} \right) \boldsymbol{\chi }_{k,\bar{m},l},
\end{equation}
where $\mathbf{u}_{k,\bar{m},l,n}\triangleq \mathbf{u}_{k,\bar{m},l}\left( n:n+N_{\mathrm{sub}}-1 \right) $ and $\mathbf{A}_{\mathrm{R},\bar{m},n}\triangleq \mathbf{A}_{\mathrm{R},\bar{m}}\left( :,n:n+N_{\mathrm{sub}}-1 \right) $. \change{Additionally, $\boldsymbol{\chi }_{k,\bar{m},l}\triangleq \Sigma _{\bar{m},l}\boldsymbol{\beta }_{k,\bar{m}}\in \mathbb{C}^{ J_k}$ denotes the amplitude vector with the path gains and phase shifts for each coupled angle.} Therefore, the covariance matrix $\mathbf{R}_{k,\bar{m},l,n}^{\mathrm{sub}} \in \mathbb{C}^{N_{\mathrm{sub}}\times N_{\mathrm{sub}}}$ can be written as
\begin{equation}\label{equ:R_sub}
	\mathbf{R}_{k,\bar{m},l,n}^{\mathrm{sub}}=\mathbf{u}_{k,\bar{m},l,n}\mathbf{u}_{k,\bar{m},l,n}^{\mathsf{H}}.
\end{equation}

\begin{algorithm}[t]
	\caption{DS-MUSIC Based Coupled DoAs/DoDs Estimation at RIS}
	\label{Alg:music}
	\begin{algorithmic}[1]
		\REQUIRE  Projected matrix $\mathbf{U}_{k,\bar{m}}$, the path number $L$ and $J_k$ of the BS-RIS and RIS-$k$-th UE channel $\forall k \in \mathcal{K}$, the number of subarrays $N_\mathrm{a}$ and quantization level $Q_{\mathrm{R}}$.
		\STATE $[\bar{\boldsymbol{\gamma}}]_q=-2\left( 1+q/Q_{\mathrm{R}} \right) ,q=1,\dots ,Q_{\mathrm{B}}$.
		\FOR {$l \in \left\{1,2,\dots,L\right\}$}
			\FOR {${\bar{m}} \in \left\{1,\frac{M}{2} \right\}$}
				\STATE $\mathbf{R}_{k,\bar{m},l}^{u}=\frac{1}{N_\mathrm{a}}\tilde{\mathbf{U}}_{k,\bar{m},l}\tilde{\mathbf{U}}_{k,\bar{m},l}^{\mathsf{H}}$.
				\STATE $\left[ \tilde{\mathbf{\Omega}}_{k,\bar{m},l},\tilde{\boldsymbol{\lambda}}_{k,\bar{m},l} \right] =\mathrm{eig}\left( \mathbf{R}_{k,\bar{m},l}^{u} \right) $.
				\STATE $\mathcal{S} _l=\left\{ n_{\lambda} | \underset{n_{\lambda}}{\mathrm{mink}}\left( \tilde{\boldsymbol{\lambda}}_{k,\bar{m},l},N_{\mathrm{R}}-J_k \right) \right\} $.
				\STATE $\mathbf{\Omega }_{k,\bar{m},l}^{\mathrm{N}}=\tilde{\mathbf{\Omega }}_{k,\bar{m},l}\left( \mathcal{S} _l \right)  $.
				\FOR {$q \in \left\{1,2,\dots,Q_{\mathrm{B}}\right\}$}
					\STATE $\tilde{\mathbf{p}}_{k,\bar{m},l,q}=\frac{1}{\left\| \left( \mathbf{\Omega }_{k,\bar{m},l}^{\mathrm{N}} \right) ^{\mathsf{H}}\xi _{\mathrm{R},\bar{m},q}^{\mathrm{sub}} \right\|_2 ^2}$.
				\ENDFOR
				\STATE $\mathbf{p}_{k,\bar{m},l}=\left[ \min \left( \tilde{\mathbf{p}}_{k,\bar{m},l} \right) ,\tilde{\mathbf{p}}_{k,\bar{m},l},\min \left( \tilde{\mathbf{p}}_{k,\bar{m},l} \right) \right]$.
				\STATE $\tilde{\mathcal{Q}}_{\bar{m},l}=\left\{ q\left|\underset{q}{\mathrm{maxk}}\left( \mathrm{peaks}\left( \mathbf{p}_{q} \right) ,2J_k \right)\right.\right\} $.
			\ENDFOR
				\STATE $\mathcal{Q} _l=\left\{ \frac{q_1+q_{\frac{M}{2}}^{\prime}}{2}\left| \underset{q_1,q_{\frac{M}{2}}^{\prime}}{\min\mathrm{k}}\left( d\left( q_1,q_{\frac{M}{2}}^{\prime} \right) ,J_k \right)  \right. \right\} $ \\where $q_1\in \tilde{\mathcal{Q}}_{1,l}$ and $q_{\frac{M}{2}}^{\prime}\in \tilde{\mathcal{Q}}_{\frac{M}{2},l}$.
				
		\ENDFOR
		\STATE $\mathcal{I} ^{\mathrm{RIS}}=\mathcal{Q} _1\cup \cdots \cup \mathcal{Q} _L$.
		\STATE $\hat{\boldsymbol{\gamma}}_k=\bar{\boldsymbol{\gamma}}\left( \mathcal{I} ^{\mathrm{RIS}} \right) $.
		\ENSURE  $\mathcal{I} ^{\mathrm{RIS}}$ and $\hat{\boldsymbol{\gamma}}_k$.	
	\end{algorithmic}
\end{algorithm}

To implement the spatial smoothing, the covariance matrix $\mathbf{R}_{k,\bar{m},l}^{u} \in \mathbb{C}^{N_{\mathrm{sub}}\times N_{\mathrm{sub}}}$ of $\mathbf{u}_{k,\bar{m},l}$ is calculated by averaging all $N_\mathrm{a}$ subarray covariance matrices $\mathbf{R}_{k,\bar{m},l,n}^{\mathrm{sub}}$, which can be expressed by
\begin{equation}\label{equ:R_u}
		\mathbf{R}_{k,\bar{m},l}^{u}=\frac{1}{N_\mathrm{a}}\sum_{n=1}^{N_\mathrm{a}}{\mathbf{R}_{k,\bar{m},l,n}^{\mathrm{sub}}}=\frac{1}{N_\mathrm{a}}\tilde{\mathbf{U}}_{k,\bar{m},l}\tilde{\mathbf{U}}_{k,\bar{m},l}^{\mathsf{H}},
\end{equation}
where $\tilde{\mathbf{U}}_{k,\bar{m},l} \triangleq \left[ \mathbf{u}_{k,\bar{m},l,1},\dots ,\mathbf{u}_{k,\bar{m},l,N_\mathrm{a}} \right] \in \mathbb{C} ^{N_{\mathrm{sub}}\times N_\mathrm{a}}$ denotes the concatenation of the subspace-supporting vectors $\mathbf{u}_{k,\bar{m},l,n}$ as defined in \eqref{equ:u_sub}. The eigendecomposition of $\mathbf{R}_{k,\bar{m},l}^{u}$ can be expressed as
\begin{equation}\label{equ:eigR_u}
\mathbf{R}_{k,\bar{m},l}^{u}=\tilde{\mathbf{\Omega}}_{k,\bar{m},l}\mathrm{Diag}( \tilde{\boldsymbol{\lambda}}_{k,\bar{m},l} ) \tilde{\mathbf{\Omega}}_{k,\bar{m},l}^{\mathsf{H}},
\end{equation}
where  $\tilde{\boldsymbol{\Omega}}_{k,\bar{m},l}\triangleq\left[ \tilde{\boldsymbol{\omega}}_{k,\bar{m},l,1},\cdots ,\tilde{\boldsymbol{\omega}}_{k,\bar{m},l,N_{\mathrm{sub}}} \right] \in \mathbb{C} ^{N_{\mathrm{sub}}\times N_{\mathrm{sub}}}$ denotes the eigenbasis matrix which contains  $N_{\mathrm{sub}}$ eigenvectors and $\tilde{\boldsymbol{\lambda}}_{k,\bar{m},l} \in \mathbb{C} ^{N_{\mathrm{sub}}}$ is the eigenvalue vector. Note that $\mathbf{R}_{k,\bar{m},l}^{u}$ is a Hermitian matrix, which means that all its $N_{\mathrm{sub}}$ eigenvectors are orthogonal to each other. By arranging the eigenvalues of $\mathbf{R}_{k,\bar{m},l}^{u}$ in descending order and organizing the corresponding eigenvectors accordingly, it can be rewritten as
\begin{equation}\label{equ:eigR_u_sorted}
	\mathbf{R}_{k,\bar{m},l}^{u}=\mathbf{\Omega}_{k,\bar{m},l}\mathrm{Diag}\left( \boldsymbol{\lambda}_{k,\bar{m},l} \right) \mathbf{\Omega}_{k,\bar{m},l}^{\mathsf{H}},
\end{equation}
where $\mathbf{\Omega}_{k,\bar{m},l}$ and $\boldsymbol{\lambda}_{k,\bar{m},l}$ denote the sorted eigenbasis matrix and eigenvalue vector, respectively. Therefore, the matrix $\mathbf{\Omega }_{k,\bar{m},l}\triangleq[ \mathbf{\Omega }_{k,\bar{m},l}^{\mathrm{S}}\,\,\mathbf{\Omega }_{k,\bar{m},l}^{\mathrm{N}} ] $ can be divided into two parts: $\mathbf{\Omega }_{k,\bar{m},l}^{\mathrm{S}}\in \mathbb{C} ^{N_{\mathrm{sub}}\times J_k}$ and $\mathbf{\Omega }_{k,\bar{m},l}^{\mathrm{N}}\in \mathbb{C} ^{N_{\mathrm{sub}}\times \left(N_{\mathrm{sub}}-J_k\right)}$. The part $\mathbf{\Omega }_{k,\bar{m},l}^{\mathrm{S}}$ consists of the $J_k$ largest eigenvalues corresponding to the directions of largest variability, which spans the signal subspace. The $N_{\mathrm{sub}}-J_k$ remaining eigenvectors, which form $\mathbf{\Omega }_{k,\bar{m},l}^{\mathrm{N}}$, span the noise subspace and are orthogonal to the signal subspace.


To calculate the MUSIC pseudo-spectrum, the practical implementation requires sampling the estimated angles. We define $\mathbf{\tilde{p}}_{k,\bar{m},l}\triangleq [ \tilde{p}_{k,\bar{m},l,1},\dots ,\tilde{p}_{k,\bar{m},l,Q_{\mathrm{B}}} ]^{\mathsf{T}} $ as the MUSIC pseudo-spectrum, where $\tilde{p}_{k,\bar{m},l,q}$ is the sampling spectral value for $q=1,\dots,Q_{\mathrm{B}}$. To align these sampled values with our proposed algorithm, we adjust the angle sampling within the range of the CBS dictionary. Therefore, we define the sub-CBS dictionary $\mathbf{\Xi }_{\mathrm{R},\bar{m}}^{\mathrm{sub}} \in \mathbb{C} ^{N_{\mathrm{sub}}\times Q_{\mathrm{B}} }$ as the first $N_{\mathrm{sub}}$ rows of the original CBS dictionary in \eqref{equ:Xi_simp}. According to \eqref{equ:xi_q}, each element in the sub-CBS dictionary $\boldsymbol{\xi} _{\mathrm{R},\bar{m},q}^{\mathrm{sub}}$ can be derived as
\begin{equation}\label{equ:element_sub}
	\boldsymbol{\xi} _{\mathrm{R},\bar{m},q}^{\mathrm{sub}}\triangleq e^{-\jmath2\pi \eta _m\left( \frac{q}{Q_{\mathrm{R}}}-1 \right) \mathbf{n}_{\mathrm{sub}}} ,q=1,\dots ,Q_{\mathrm{B}},
\end{equation}
where $\mathbf{n}_{\mathrm{sub}}=\left[ 1,\cdots ,N_{\mathrm{sub}} \right] $.

By mapping the sampled beams $\boldsymbol{\xi} _{\mathrm{R},\bar{m},q}^{\mathrm{sub}}$ onto the noise space $\mathbf{\Omega }_{k,\bar{m},l}^{\mathrm{N}}$, the MUSIC algorithm constructs the pseudo-spectrum by evaluating the inverse of the projection's norm at each potential direction, which can be given as
\begin{equation}\label{equ:P_music}
	\tilde{p}_{k,\bar{m},l,q}=\frac{1}{\left\| ( \mathbf{\Omega }_{k,\bar{m},l}^{\mathrm{N}} ) ^{\mathsf{H}}\boldsymbol{\xi} _{\mathrm{R},\bar{m},q}^{\mathrm{sub}} \right\|_2 ^2},q=1,\dots ,Q_{\mathrm{B}}.
\end{equation}
Therefore, the MUSIC pseudo-spectrum can be express as
\begin{equation}\label{equ:P_music_ext}
\mathbf{p}_{k,\bar{m},l}=\left[ \tilde{p}_{k,\bar{m},l,1},\cdots ,\tilde{p}_{k,\bar{m},l,Q_{\mathrm{B}}} \right].
\end{equation}

Given that any beam $\boldsymbol{\xi} _{\mathrm{R},\bar{m},q}^{\mathrm{sub}}\in \mathbf{\Omega }_{k,\bar{m},l}^{\mathrm{S}}$ in the signal subspace should ideally have minimal projection onto the noise subspace $ \mathbf{\Omega }_{k,\bar{m},l}^{\mathrm{N}}$, which means that the $\ell_2$-norm in \eqref{equ:P_music} associated with the actual directions should approximate zero. The MUSIC algorithm typically relies on peak detection to find actual physical directions. 
The potential directions can then be determined by employing local-maximum algorithms. However, due to the extended range of coupled angles, the estimated number of directions will be $K_{m,\text{cp}}$ times the original with $K_{\text{cp}}<K_{m,\text{cp,max}}$.
\begin{proposition}
	\label{prop:max_k}
    The maximum number of identical beams can be given as
    \begin{equation}\label{equ:num_ident_beam}
		K_{m,\rm{cp},\max}=\lfloor \eta _m\left( 2-1/Q_{\mathrm{R}} \right) \rfloor + 1,
	\end{equation}
	\change{where $\lfloor\cdot\rfloor$ is the floor function.}
\end{proposition}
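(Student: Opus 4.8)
The plan is to read \eqref{equ:num_ident_beam} as an upper bound on the multiplicity of any atom of the (sub-)CBS dictionary and to establish it by a one-dimensional packing argument in the sampled-angle domain. First I would use the explicit form of the dictionary atoms: by \eqref{equ:xi_q} (equivalently \eqref{equ:element_sub}), the $q$-th beam is, up to normalization, the vector with entries $e^{-\jmath\pi\eta_m v_q n}$ over the array indices $n$ (which include $n=1$), where $v_q=2(q-Q_{\mathrm{R}})/Q_{\mathrm{R}}$ is the sampled angle. Evaluating two atoms at the index $n=1$ shows that $\boldsymbol{\xi}_{\mathrm{R},m,q}$ and $\boldsymbol{\xi}_{\mathrm{R},m,q'}$ coincide as vectors \emph{if and only if} $\eta_m(v_q-v_{q'})\in 2\mathbb{Z}$, i.e.\ their sampled angles differ by an integer multiple of $2/\eta_m$; this is the only way two distinct sampled angles can produce the same steering vector, since coincidence at $n=1$ already forces coincidence at every index. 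This identifies the sole aliasing mechanism underlying the replication of directions.

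Given this characterization, I would bound the cardinality of a maximal family of mutually identical beams. Writing the sampled angles of such a family in increasing order $v_{q_1}<\cdots<v_{q_r}$, the previous step forces each consecutive gap to satisfy $v_{q_{i+1}}-v_{q_i}\ge 2/\eta_m$, so telescoping gives $v_{q_r}-v_{q_1}\ge (r-1)\,(2/\eta_m)$. On the other hand, all sampled angles lie in the angular aperture covered by the CBS grid, i.e.\ the $Q_{\mathrm{B}}=2Q_{\mathrm{R}}-1$ cells of width $\delta=2/Q_{\mathrm{R}}$, whose span is $Q_{\mathrm{B}}\delta=2(2-1/Q_{\mathrm{R}})$; hence $v_{q_r}-v_{q_1}\le 2(2-1/Q_{\mathrm{R}})$. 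Combining the two inequalities yields $r-1\le \eta_m(2-1/Q_{\mathrm{R}})$, and since $r\in\mathbb{N}$ we obtain $r\le \lfloor\eta_m(2-1/Q_{\mathrm{R}})\rfloor+1=K_{m,\mathrm{cp},\max}$, which is \eqref{equ:num_ident_beam}. The same estimate simultaneously bounds the number of grid angles that alias onto a single physical coupled angle $\gamma_{k,l,j}=\psi_l-\vartheta_{k,j}$ — precisely the number of spurious MUSIC peaks that path generates — because any two such grid angles are themselves identical beams and hence $\ge 2/\eta_m$ apart.

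The algebraic ``if and only if'' step and the telescoping sum are routine; the part requiring care is pinning down the exact angular span of the dictionary so that the $1/Q_{\mathrm{R}}$ correction inside the floor comes out right — a coarser accounting of the grid endpoints gives $2-2/Q_{\mathrm{R}}$ instead — and, separately, noting that the gap bound $v_{q_{i+1}}-v_{q_i}\ge 2/\eta_m$ is never smaller than the grid step $\delta=2/Q_{\mathrm{R}}$, which holds because $\eta_m$ stays close to one (within the fractional bandwidth) while $Q_{\mathrm{R}}$ is a large quantization level. As a sanity check I would specialize to $\eta_m=1$, where \eqref{equ:num_ident_beam} returns $2$, consistent with the remark preceding Proposition~1 that the narrowband coupled dictionary has only $Q_{\mathrm{R}}$ distinct columns, each therefore appearing at most twice among the $2Q_{\mathrm{R}}-1$ columns of the simplified dictionary.
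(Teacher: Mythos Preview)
Your argument is essentially the paper's: both proofs identify the aliasing period (you in the angle domain, the paper in the index domain via $q'=q+K_{m,\text{cp}}Q_{\mathrm{R}}/\eta_m$) and then count how many period-separated points fit in the CBS range $Q_{\mathrm{B}}$, yielding $K_{m,\text{cp}}=Q_{\mathrm{B}}\eta_m/Q_{\mathrm{R}}$ and hence \eqref{equ:num_ident_beam}. One small slip in your parenthetical: the exact span of $Q_{\mathrm{B}}$ grid points spaced by $\delta$ is $(Q_{\mathrm{B}}-1)\delta=2(2-2/Q_{\mathrm{R}})$, not $Q_{\mathrm{B}}\delta$, so you have the ``careful'' and ``coarse'' cases reversed --- the $2-1/Q_{\mathrm{R}}$ value is actually the looser count (which is fine for an upper bound and matches the paper's derivation exactly).
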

\begin{proof} 
Recalling from \eqref{equ:element_sub} that the sampled beams $\boldsymbol{\xi}_{\mathrm{R},m,q}^{\mathrm{sub}}$ can be written as $\boldsymbol{\xi} _{\mathrm{R},m,q}^{\mathrm{sub}}=e^{-\jmath2\pi \eta _m\left( \frac{q}{Q_{\mathrm{R}}}-1 \right)\mathbf{n}_{\mathrm{sub}}}$. Based on the periodicity property of the complex exponential function, which satisfies $e^{-\jmath2\pi K_{m,\text{cp}}}  =1$ for any integer $K_{m,\text{cp}} \in \mathbb{N}$, $\boldsymbol{\xi}_{\mathrm{R},m,q}^{\mathrm{sub}}$ can be equivalently expressed as
\begin{equation}\label{equ:xi_period}
		\begin{split}
			\boldsymbol{\xi} _{\mathrm{R},m,q}^{\mathrm{sub}}
			&=e^{-\jmath2\pi K_{m,\text{cp}}\mathbf{n}_{\mathrm{sub}}}e^{-\jmath2\pi \eta _m\left( \frac{q}{Q_{\mathrm{R}}}-1 \right) \mathbf{n}_{\mathrm{sub}}}
			\\
			&=e^{-\jmath2\pi \eta _m\left( \frac{q}{Q_{\mathrm{R}}}-1+\frac{K_{m,\text{cp}}}{\eta _m} \right) \mathbf{n}_{\mathrm{sub}}}.
		\end{split}
	\end{equation}
It is observed that when $q^{\prime} = q+K_{m,\text{cp}}Q_{\mathrm{R}}/\eta _m $, the element in sub-CBS dictionary will generate the identical beam, i.e., $\boldsymbol{\xi} _{\mathrm{R},m,q}^{\mathrm{sub}}=\boldsymbol{\xi} _{\mathrm{R},m,q'}^{\mathrm{sub}}$. Therefore, the maximum number of identical beams $K_{m,\text{cp},\max}$ can be obtained by solving the integer equation $K_{m,\text{cp}}Q_{\mathrm{R}}/\eta _m=Q_{\mathrm{B}}$. By rearranging the terms and employing the relationship $Q_{\mathrm{B}}=2Q_{\mathrm{R}}-1$, we can obtain
	\begin{equation}
		K_{m,\text{cp}}=\frac{Q_{\mathrm{B}}\eta _m}{Q_{\mathrm{R}}}.
	\end{equation}
	
	Therefore, the maximum value of $K_{m,\text{cp}}$ is the integer part of this expression, which can be given as
	\begin{equation}\label{equ:K_cp_max_floor}
		K_{m,\rm{cp},\max} = \left\lfloor \eta_m \left( 2 - \frac{1}{Q_{\mathrm{R}}} \right) \right\rfloor + 1.
	\end{equation}
\end{proof}

According to Proposition \ref{prop:max_k}, instead of the expected $J_k$ peaks in the MUSIC pseudo-spectrum, $K_{\text{cp},\max} J_k$ peaks will be detected. This makes the algorithm infeasible for distinguishing the true physical directions from fake ones within a single SC. 
Thankfully, it is observed from \eqref{equ:xi_period} that the period $K_{m,\text{cp}}/\eta_m$ varies with SC frequency. Therefore, the periodic false spatial directions are frequency-dependent, while the true physical directions remain frequency-independent \cite{wu2023parametric}. Inspired by this, we propose a novel double sensing (DS) algorithm to acquire the true physical direction. In this approach, we select two SCs to calculate and compare their MUSIC pseudo-spectrum, as these two SCs provide the necessary information to uniquely determine the coupled angles at RIS.

Assuming $\theta_{q_{\mathrm{p}}}$ to be the true physical direction, the corresponding sampled-beam $\boldsymbol{\xi}_{\mathrm{R},\bar{m},q_{\mathrm{p}}}$ can be determined according to \eqref{equ:P_music}.
Then, the unique angular index $q_{\mathrm{p}}$ can be easily decoupled from \eqref{equ:element_sub}, which is frequency-independent and remains consistent across all the SCs\cite{wu2023parametric}. In contrast, the index of periodic false angles, given by $q^{\prime}_{\mathrm{p}} = q_{\mathrm{p}}+K_{m,\text{cp}}Q_{\mathrm{R}}/\eta _m$, varies among different SCs. Therefore, the positions where peaks overlap indicate the true physical directions. However, due to unavoidable noise and estimation errors, the pseudo-spectrum across different SCs may exhibit offsets. To estimate the true physical directions accurately, we first identify the $J_k$ closest peaks, which correspond to the overlapping peaks across different SCs. Therefore, we introduce the peak distance $d$ to quantify this deviation, given by:
\begin{equation}\label{equ:dist}
d\left( q_m,q_{m^{\prime}}^{\prime} \right) =\left| q_m-q_{m^{\prime}}^{\prime} \right|,
\end{equation}
where $q_m$ and $q_{m^{\prime}}^{\prime}$ denote the indices of the peaks at two selected SCs, respectively. So far, the detailed steps of the proposed DS-MUSIC algorithm are summarized in Algorithm~\ref{Alg:music}.

Selecting the closest peaks ensures consistency in the estimated directions and mitigates the impact of frequency-dependent variations. However, while the closest peaks are used for estimating the true directions, it is also essential to ensure that 1) the number of non-overlapping peaks are limited, and 2) these non-overlapping peaks are sufficiently separated. Satisfying these conditions allows effective discrimination between true and false physical directions. This can be achieved by selecting the appropriate SCs.
Specifically, since $K_{m,\text{cp},\max}$ in \eqref{equ:num_ident_beam} is affected by the choice of SCs, we should aim to satisfy $K_{m,\text{cp},\max}\leq2$, which means each angle has at most two identical peaks. \change{Employing the floor function inequality $\lfloor x \rfloor \le x$ and substituting \eqref{equ:f_m} into \eqref{equ:K_cp_max_floor}, we can obtain}
\begin{equation}\label{equ:m}
	m<\frac{M}{2}+\frac{Mf_c}{B Q_\mathrm{B}}+\frac{1}{2}.
\end{equation}
Besides, substituting \eqref{equ:f_m} into \eqref{equ:dist}, the peak distance $d$ can be derived as
\begin{equation}\label{equ:dist2}
	d\left( q_m,q_{m^{\prime}}^{\prime} \right) =K_{m,\text{cp}}Q_{\mathrm{R}}f_c\left| \frac{1}{f_{m^{\prime}}}-\frac{1}{f_m} \right|.
\end{equation}

To distinguish true physical directions effectively, the distance among non-overlapping peaks should also be sufficiently large. According to \eqref{equ:dist2}, it is evident that the intervals between selected SCs should be large enough. Therefore, under the constraint in \eqref{equ:m}, we select the first and the $M/2$-th SCs to enhance the accuracy of estimation\footnote{Note that while using more subcarriers may offer additional spatial information, our two-subcarrier design leverages the deterministic beam split property for efficient peak alignment. Extending the alignment strategy to more than two subcarriers would require complex correspondence matching, which may introduce ambiguity and computational overhead without significant accuracy improvement.}.

Fig. \ref{fig:RIS_angle} depicts an example of estimated MUSIC pseudo-spectrum utilizing the proposed DS approach. Assuming that the cascaded channel contains $J_k=3$ RIS-UE paths for an arbitrary UE with SNR~$=20 $~dB. It can be observed that the peaks overlap at actual physical directions, while those associated with fake spatial directions are completely separated. The estimated physical directions can be easily distinguished from the fake ones and accurately match the actual physical directions.

\begin{figure}[!t]
	\centering
	\includegraphics[width=0.47\textwidth]{./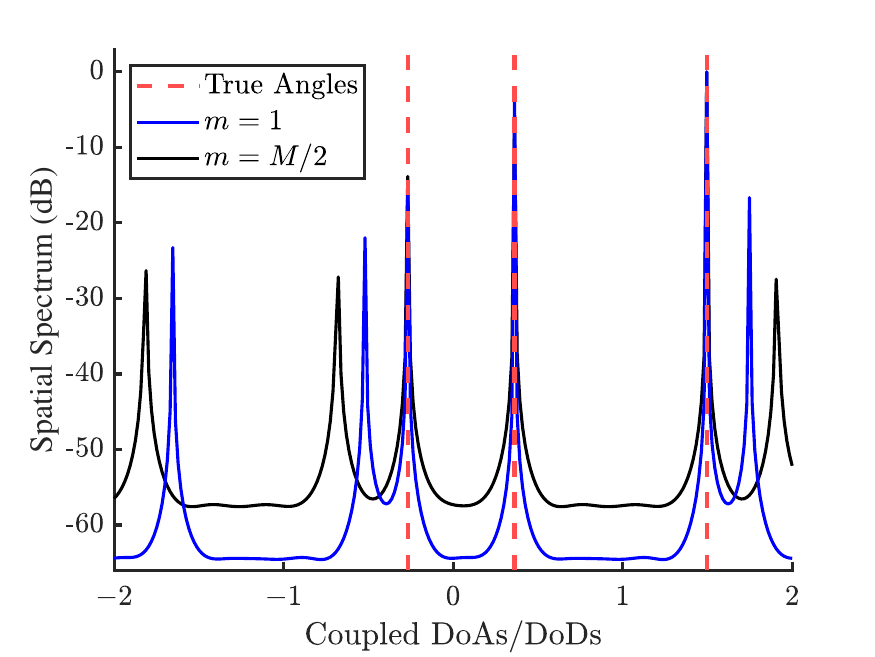}
	\caption{\change{Normalized MUSIC pseudo-spectrum via DS-MUSIC algorithm.}}
	\label{fig:RIS_angle}
\end{figure}

\section{Channel Estimation for Other SCs}\label{sec:CE_other_SC}
Through the proposed two-step full CSI estimation, we obtained the cascaded channels for two SCs and the corresponding common physical angle. Based on this knowledge, the cascaded channel in \eqref{equ:VAD_H} can be rewritten as
\begin{equation}\label{equ:LS_h} 
	\mathbf{h}_{k}^{\mathrm{cas}}[m]=\left( \mathbf{A}_{\mathrm{T},m}^{*}\left( \hat{\boldsymbol{\varphi}} \right) \otimes \mathbf{A}_{\mathrm{R},m}\left(\hat{\boldsymbol{\gamma}}_k \right) \right) \tilde{\mathbf{x}}_{k,m},
\end{equation}
where $\tilde{\mathbf{x}}_{k,m}\!\triangleq \!\boldsymbol{\beta }_{k,m}\!\otimes\! \mathbf{\Sigma }_m\!\in \mathbb{C} ^{L^2 \!J_k}$ represents the angle-excluded vector, $\hat{\boldsymbol{\varphi}}\triangleq [ \hat{\varphi}_1,\dots ,\hat{\varphi}_L ] \in \mathbb{C} ^L$ is the estimated DoDs at BS of all $L$ paths, and $\hat{\boldsymbol{\gamma}}_k\triangleq [ \hat{\boldsymbol{\gamma}}_{k,1}^{\mathsf{T}},\dots ,\hat{\boldsymbol{\gamma}}_{k,L}^{\mathsf{T}} ] ^{\mathsf{T}}\in \mathbb{C} ^{LJ_k}$ denotes the estimated coupled directions at RIS. 

Substituting \eqref{equ:LS_h} into \eqref{equ:Dic_vec_y}, the received signal $\mathbf{y}_{k,t}\left[ m \right]$ can be rewritten as
\begin{equation}\label{equ: y_ls}
		\mathbf{y}_{k,t}\left[ m \right]=\tilde{\mathbf{\Psi}}_m\tilde{\mathbf{x}}_{k,m},
\end{equation}
where $\tilde{\mathbf{\Psi}}_m\!\triangleq\bar{\mathbf{R}}_m\left( \mathbf{A}_{\mathrm{T},m}^{*}\left( \hat{\boldsymbol{\varphi}} \right) \!\otimes\! \mathbf{A}_{\mathrm{R},m}\left( \hat{\boldsymbol{\gamma}}_k \right) \right)\! \in \!\mathbb{C} ^{PT\times L^2J_k}$. The column size of the $\tilde{\mathbf{\Psi}}_m$ matrix is reduced from $Q_{\mathrm{B}}Q_{\mathrm{T}}$ of the overcomplete dictionary to a multipath-related size $L^2J_k$. Moreover, the incorporation of angular information transforms the observation matrix $\tilde{\mathbf{\Psi}}_m$ transformed into a full-rank matrix. Consequently, the sparse representation in \eqref{equ:c_sparse_problem} is reverted to the non-sparse frequency domain. This significant reduction in the column size and reconstruction of signal structure reformulate the channel estimation problem of remaining SCs a low-dimensional linear regression problem, enhancing both the efficiency and feasibility of the LS solution.  
Therefore, by employing LS algorithm, the angle-excluded coefficients $\tilde{\mathbf{x}}_{k,m}$ can be easily obtained as
\begin{equation}
\label{equ:exclude_x}
\hat{\tilde{\mathbf{x}}}_{k,m}=\left( \tilde{\mathbf{\Psi}}_{m}^{\mathsf{H}}\tilde{\mathbf{\Psi}}_m \right) ^{-\mathsf{1}}\tilde{\mathbf{\Psi}}_{m}^{\mathsf{H}}\mathbf{y}_{k,t}\left[ m \right].
\end{equation}

Substituting \eqref{equ:exclude_x} into \eqref{equ:LS_h}, the full CSI for other SCs $m \in \mathcal{M}\backslash\{1,M/2\}$ can be expressed as
\begin{equation}
\hat{\mathbf{h}}_{k}^{\mathrm{cas}}[m]=\left( \mathbf{A}_{\mathrm{T},m}^{*}\left( \hat{\boldsymbol{\varphi}} \right) \otimes \mathbf{A}_{\mathrm{R},m}\left( \hat{\boldsymbol{\gamma}}_k \right) \right) \hat{\tilde{\mathbf{x}}}_{k,m}.
\end{equation}

\section{Analysis of Computational Complexity}\label{sec:complexity_analysis}
In this section, we analyze the computational complexity of the proposed channel estimation protocol and compare it with that of existing algorithms. Specifically, the overall complexity of the proposed scheme is primarily attributed to the following four components:
\begin{enumerate}
	\item Cascaded channel estimation for two selected SCs by CBS-based GAMP approach: $\mathcal{O} \left( 2PTQ_{\mathrm{T}}Q_\mathrm{B}N_\mathrm{iter}\right) $, where $N_\mathrm{iter}$ denotes the number of iterations required for GAMP to converge.
	\item DoDs acquisition at BS: $\mathcal{O} \left( Q_{\mathrm{T}}Q_\mathrm{B} \right) $.
	\item Coupled directions acquisition at the RIS using DS-MUSIC algorithm: $\mathcal{O} \left( 2N_{\mathrm{R}}^{3} \right) $.
	\item Channel estimation for other SCs: $\mathcal{O} \left( \left( M-2\right)L^6J_{k}^{3} \right) $.
\end{enumerate}

The overall complexity of the proposed scheme can be expressed as
\begin{equation}
\mathcal{O} \left( 2PTQ_{\mathrm{T}}Q_{\mathrm{B}}N_\mathrm{iter}+2N_{\mathrm{R}}^{3}+\left( M-2 \right) L^6J_{k}^{3} \right). 
\end{equation}

For comparison, we also discuss the complexity of the conventional OMP scheme \cite{tropp2007signal}, BSA-OMP scheme \cite{elbir2023bsa} and CBS-GAMP scheme \cite{su2024generalized}. These algorithms perform under traditional channel estimation frameworks which means they need to be executed at each SC or simultaneously process all SCs. 
The computational complexity of conventional OMP algorithm is $\mathcal{O} \left( MPTQ_{\mathrm{T}}{Q_{\mathrm{R}}^2} \right)  $. The BSA-OMP approach has the same complexity as the conventional OMP, $\mathcal{O} \left( MPTQ_{\mathrm{T}}{Q_{\mathrm{R}}^2} \right) $, since it similarly processes each SC independently. The computational complexity of CBS-GAMP is $\mathcal{O} \left(MPTQ_{\mathrm{T}}Q_\mathrm{B}N_\mathrm{iter} \right) $. Additionally, the number of SCs $M$ is typically large in wideband systems. Therefore, it is evident that the computational complexity of the proposed scheme is significantly lower than the other approaches mentioned.


\section{Numerical Results and Discussion}\label{sec:Numerical Results and Discussion}
In this section, we present numerical results compared with both mmWave and THz benchmarks to validate the effectiveness of the proposed scheme. Without loss of generality, we consider wideband RIS-assisted communication scenarios with $M=128$ SCs. A BS employing a ULA with $N_{\mathrm{T}}$ = 16 antennas transmits signals to $K=8$ single antenna UEs, with the assistance of a RIS equipped with $N_{\mathrm{R}}= 64$ passive reflecting elements. Moreover, the hybrid beamformer vector $\mathbf{w}_{t,p}[m]$ is assumed to be randomly distributed \cite{elbir2023bsa}.

\subsection{Comparison with mmWave Solution}
\label{subsec:mmwave}
In this subsection, we present simulation results to illustrate the distinct characteristics and impacts of beam squint and beam split effects, and further demonstrate the unique advantages of the proposed framework in comparison with mmWave~solutions.
\subsubsection{Setup}
For the mmWave system, the central carrier frequency is set to $f_c= 28$ GHz with a bandwidth of $B = 600$ MHz, following the configuration in \cite{liu2021cascaded}. For the considered THz system, the central carrier frequency is increased to $f_c= 100$  GHz with a wider bandwidth of $B = 15$ GHz. Furthermore, the channel characterization adopts the same parameters as those specified in \cite{tan2021wideband}, and the path gains are model as $\alpha_l \sim \mathcal{CN}(0,1)$ and the time delays are constrained by $\tau_l \leq \tau_{\max}$ with $\tau^{\mathrm{mmWave}}_{\max}=53$ ns and $\tau^{\mathrm{THz}}_{\max}=20$ ns for mmWave and THz systems, respectively.

Additionally, we assume that the DoDs at BS are available as prior knowledge in this subsection, since the considered benchmark focused only on estimating the angles at RIS. Accordingly, the number of propagation paths is set to $L=1$ and $J_k=2, \forall k$, respectively. Furthermore, the DoAs and DoDs are randomly generated within the range $\left[-\pi/2,\pi/2\right]$. To further highlight the effect of beam squint and beam split, we consider the DoDs at RIS to be under large-angle reflection conditions, which significantly deviate from the normal incidence.
\subsubsection{Benchmark}
We adopt the NOMP based mmWave solution proposed in \cite{liu2021cascaded} as the benchmrk method. This approach is implemented for both mmWave and THz band, referred to as `NOMP-mmWave' and `NOMP-THz', respectively.

\subsubsection{Metric}
The root mean square error (RMSE) is utilized to evaluate the accuracy of angle estimation, defined~as
\begin{equation}
\mathrm{RMSE}=\sqrt{\mathbb{E} \left\{ \left| \theta -\hat{\theta} \right| \right\}},
\end{equation}
where $\theta$ and $\hat{\theta}$ represent the true angle and the estimated angle, respectively.

\subsubsection{Performance Analysis}
For a fair comparison, identical pilot resources are allocated to both the NOMP-based solution and the proposed approach for estimating cascaded angles at RIS. Let $M_{\mathrm{sub}}$ denote the number of SCs used for angle estimation. The resource parameters of NOMP algorithm are set to $T = 50 $ and $ M_{\mathrm{sub}}=12$, with the hybrid beamforming functionality disabled, consistent with the original implementation settings in \cite{liu2021cascaded}. Accordingly, the proposed method utilizes $T=25$, $P=12$ and $M_{\mathrm{sub}} = 2$ resource~elements.

Fig.~\ref{fig:RMSE_SNR} shows the RMSE performance of the coupled DoAs/DoDs estimation at RIS. While the NOMP-based mmWave solution achieves satisfactory performance in conventional mmWave scenarios, it fails to adequately address the beam split effect in THz systems. \change{This degradation is primarily attributed to the substantially larger SC spacing in THz band, which induces more severe beam misalignment compared to the relatively mild beam squint effect observed at mmWave frequencies. As a result, conventional mmWave solutions fail to mitigate this misalignment by merely accumulating information from multiple SCs. In contrast, the proposed approach integrates a specifically designed DS-MUSIC algorithm, which distinguishes the true angular components from the frequency-dependent false ones by leveraging the inter-SC misalignment within an extended angular range, thereby achieving significantly enhanced estimation accuracy.} Moreover, increasing the angular dictionary size $Q_{\mathrm{R}}$ leads to additional performance gains by mitigating quantization errors in the angle domain.
\begin{figure}[!t]
	\centering
	\includegraphics[width=0.48\textwidth]{./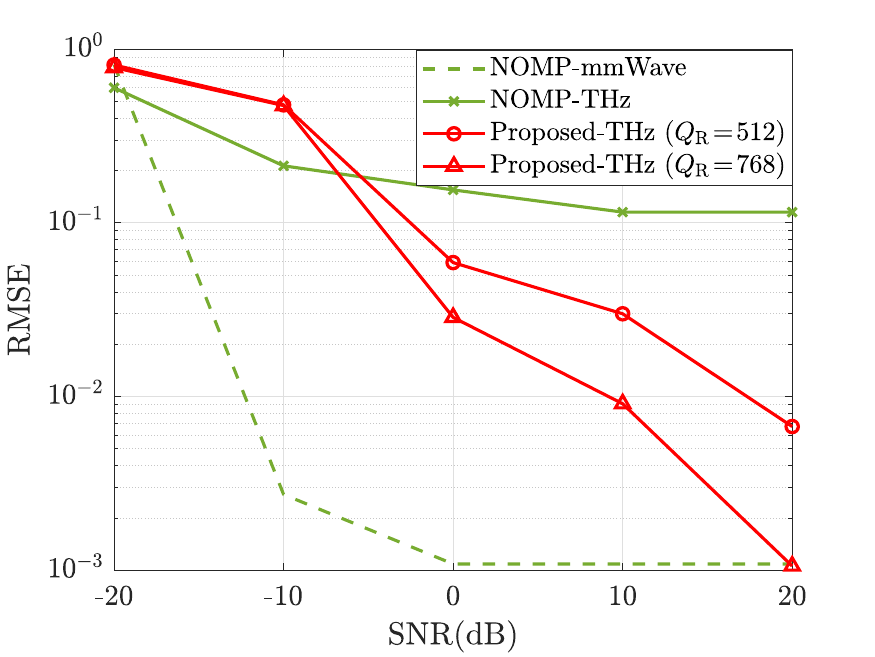}
	\caption{RMSE performance of coupled DoAs/DoDs at RIS against SNR.}
	\label{fig:RMSE_SNR}
\end{figure}
\subsection{Comparison with THz Solutions}
In this subsection, we investigate the impact of different system parameters on performance, and provide a comprehensive comparison with state-of-the-art algorithms in THz band.
\subsubsection{Setup}
We consider the THz system as defined in Section \ref{subsec:mmwave}. Without loss of generality, the DoAs and DoDs are randomly generated from the discretized grid within the range $\left[-\pi/2,\pi/2\right]$ in this subsection \cite{dovelos2021channel} \footnote{The quantization errors introduced by discrete angular grids assumption are small for large dictionary quantization level $Q_{\mathrm{T}}$ and $Q_{\mathrm{R}}$, which do not affect significantly the normalized array gain and can be neglected.}. Consequently, the quantization level of overcomplete dictionaries can be reduced to $Q_{\mathrm{T}}= 32$ and $Q_{\mathrm{R}}= 128$ for simplicity, and the 1-st and 64-th SCs are selected for the full CSI estimation in Phase I Step 1. Furthermore, we consider a more practical multipath model where the location of both BS and RIS is unknown and there exists $L=J_k=3, \forall k$ propagation paths for both BS-RIS and RIS-UEs channels.

\subsubsection{Benchmarks}
For comparison, we also evaluate the following benchmark schemes in the simulation. Unless otherwise specified, they are configured identically to the proposed method to ensure a fair comparison.
\begin{itemize}
	\item \textit{Conventional-OMP} \cite{tropp2007signal}: Traditional dictionaries are adopted to span the beam space and the CSI is reconstructed by OMP algorithm at each SC.
	\item \textit{BSA-OMP} \cite{elbir2023bsa}: Beam-split-adapted dictionaries are employed to cover the beam space and the CSI is recover by OMP algorithm at each SC.
	\item \textit{CBS-GAMP} \cite{su2024generalized}: CBS dictionaries are employed to cover the beam space and the CSI is recover by GAMP algorithm at each SC. 
	\item \textit{Oracle-LS}: The angular information is assumed perfectly known and the angle-excluded information is estimated by LS algorithm. This algorithm is regarded as the performance upper bound of the proposed scheme.
\end{itemize}

\subsubsection{Metrics}
We use the normalized mean square error (NMSE) as a metric for full CSI estimation performance, which is given by 
\begin{equation}
\mathrm{NMSE}=\mathbb{E} \left\{ \frac{\left\| \mathbf{h}_{k}^{\mathrm{cas}}[m]-\mathbf{\hat{h}}_{k}^{\mathrm{cas}}[m] \right\| _{2}^{2}}{\left\| \mathbf{h}_{k}^{\mathrm{cas}}[m] \right\| _{2}^{2}} \right\}.
\end{equation}

Additionally, we employ the correct probability as a metric for direction estimation performance, which is given by 
\begin{equation}
\mathrm{P}_{\mathrm{c}}=\mathbb{E} \left\{ \mathrm{Pr}\left[ \left| \theta -\hat{\theta} \right|<\varepsilon _{\mathrm{th}} \right] \right\} ,
\end{equation}
where $\varepsilon _{\mathrm{th}}\!=\!\delta/2$ denotes the threshold which is set as half the sampling interval.

\subsubsection{Performance Analysis}
\begin{figure}[!t]
	\centering
	\includegraphics[width=0.48\textwidth]{./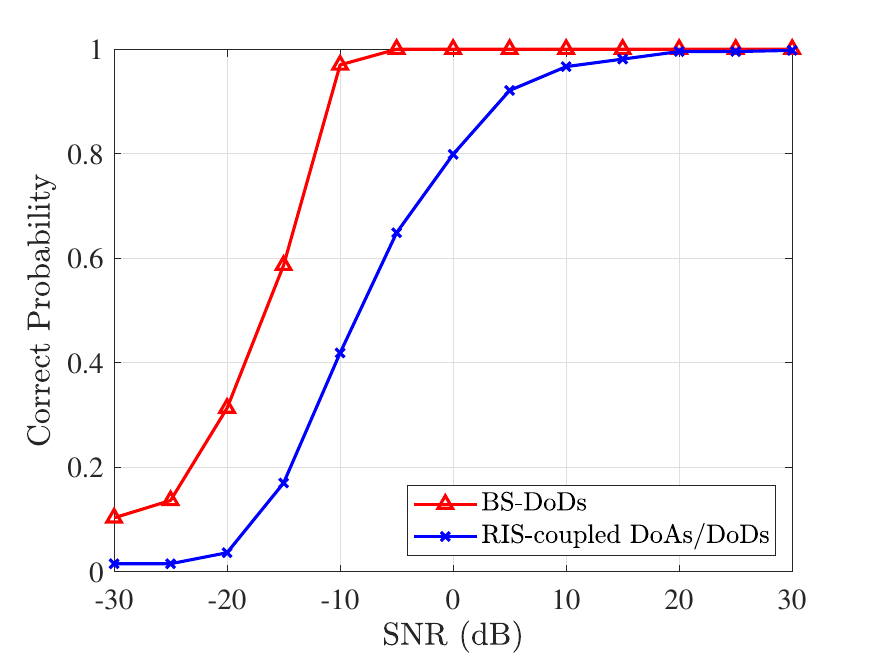}
	\caption{Correct probability of DoDs at BS and coupled DoAs/DoDs at RIS against SNR.}
	\label{fig:Pro_SNR}
\end{figure}

Fig. \ref{fig:Pro_SNR} depicts the correct probability performance of the DoDs at BS and coupled DoAs/DoDs at RIS versus SNR with $T=50$ and $P=30$. It is shown that both the correct probabilities approach one as SNR increases. This indicates that the proposed scheme can ensure a certain level of reliability even under low SNR conditions and has the ability to obtain accurate physical directions with sufficient channel conditions. Additionally, the correct probability of the coupled directions at RIS is consistently lower than that of the DoDs at BS across all SNR values. This is because the estimation of the coupled directions at RIS depends on the accuracy of the DoDs estimation at BS. Angular estimation errors at BS would propagate to RIS, leading to performance gap between the two probability curves. This discrepancy highlights the propagation of errors from Phase I to Phase II: the angular estimation errors at BS directly impact the estimation accuracy of the coupled directions at RIS, which in turn affects the CSI estimation for other SCs in subsequent steps. However, as the SNR increases, the influence of these propagated errors diminishes, which enables Phase II to achieve improved CSI recovery performance.

\begin{figure}[!t]
	\centering
	\includegraphics[width=0.48\textwidth]{./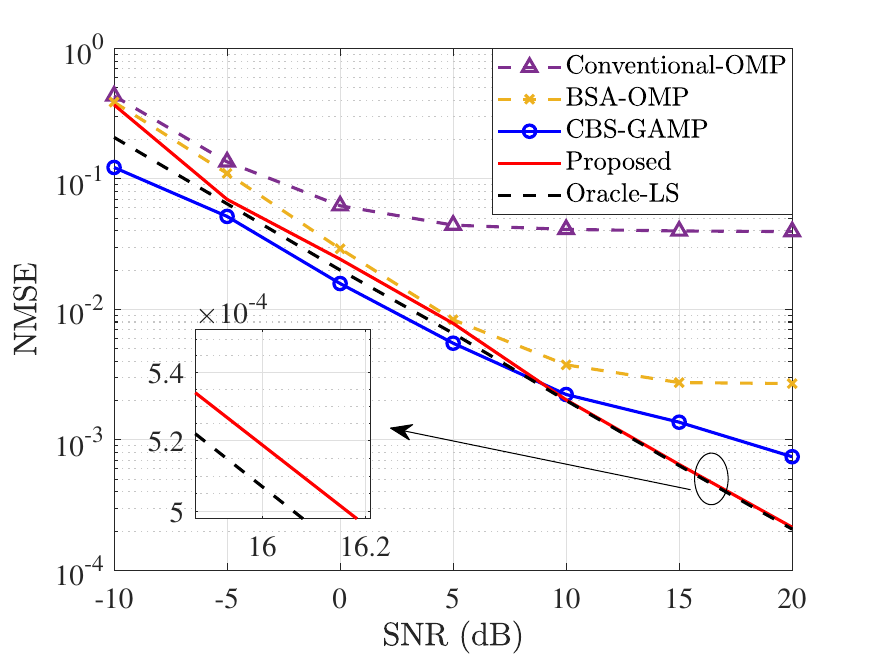}
	\caption{NMSE performance comparison against SNR.}
	\label{fig:NMSE_SNR}
\end{figure}
Fig. \ref{fig:NMSE_SNR} illustrates the NMSE performance versus SNR with $T=50$ and $P=30$. Overall, it can be observed that the proposed scheme outperforms conventional OMP and BSA-OMP algorithm across all SNR regions. This is because the proposed scheme has the ability to exploit unknown channel information, and the proposed dictionaries can effectively cover the beam space under beam split effect. Additionally, the proposed scheme is slightly inferior to that of CBS-GAMP algorithm when SNR is less than $10$ dB. This is because estimation errors generated by partial SCs propagate more significantly to the remaining SCs under poor channel conditions. However, the proposed scheme outperforms CBS-GAMP algorithm in high SNR regions, i.e., SNR is larger than $10$ dB. This is because angle decoupling provides more accurate angular estimation information. This suggests that there exists a performance-complexity trade-off between these two algorithms. \change{Moreover, it is observed that the proposed scheme progressively  approaches the NMSE performance of oracle-LS benchmark as SNR increases.
Under high SNR conditions, the performance gap between the two schemes becomes negligible, further verifying the effectiveness of the proposed scheme.}

\begin{figure}[!t]
	\centering
	\includegraphics[width=0.48\textwidth]{./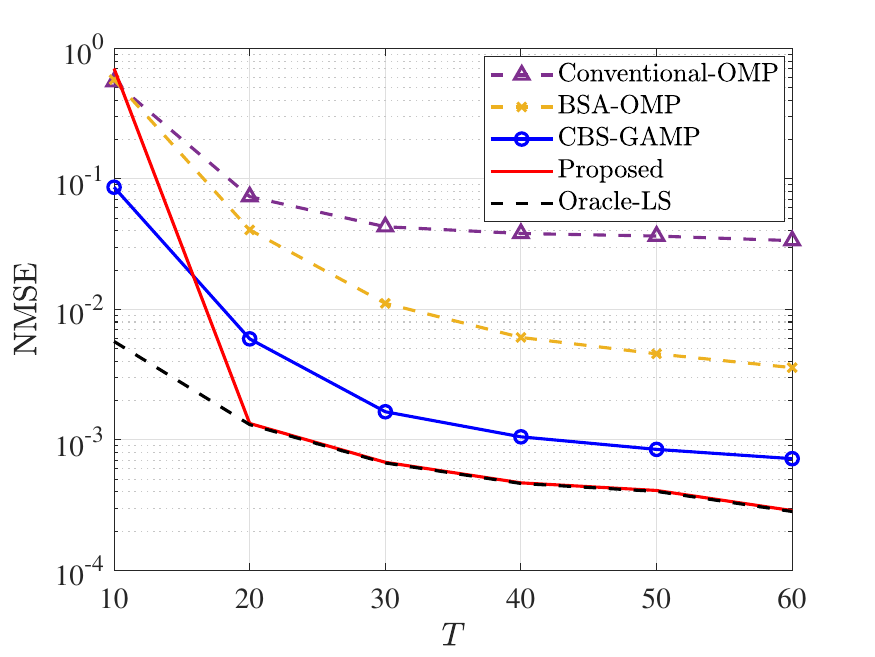}
	\caption{NMSE performance comparison against the number of subframes $T$.}
	\label{fig:NMSE_T}
\end{figure}
Fig. \ref{fig:NMSE_T} shows the NMSE performance versus the number of subframes $T$, with SNR~$= 20$~dB and $P = 15$. As expected, the NMSE performance of all the algorithms improves with the increase in the number of subframes $T$. It can be observed that the proposed scheme maintains good performance even with a short length of subframes, e.g., $T=20$, demonstrating the robustness of the proposed algorithm. Moreover, compared to the CBS-GAMP method, the proposed algorithm achieves similar performance with approximately a $50\%$ subframes reduction. This indicates that the proposed scheme can efficiently reduce the pilot overhead for channel estimation.

\begin{figure}[!t]
	\centering
	\includegraphics[width=0.48\textwidth]{./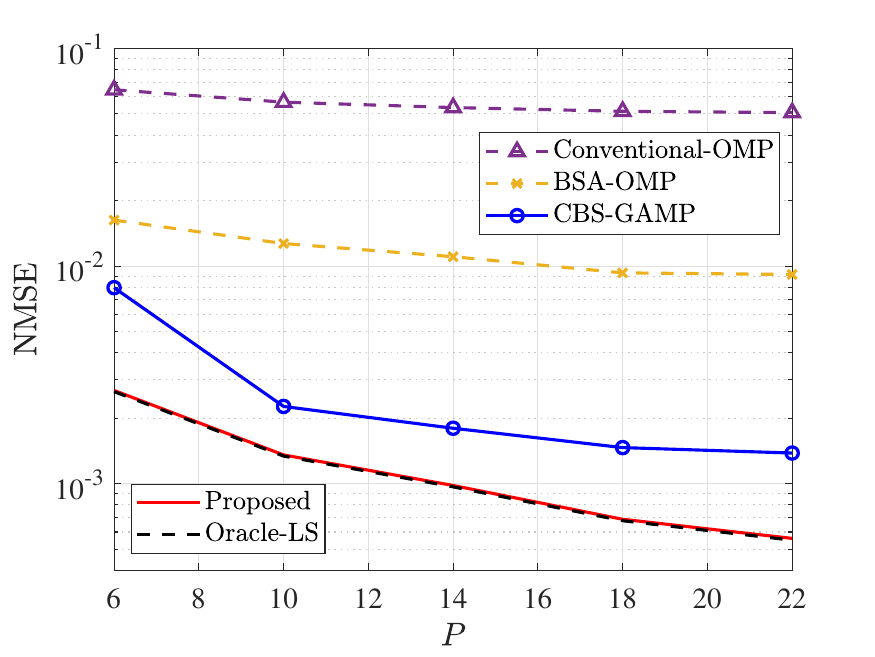}
	\caption{NMSE performance comparison against the number of time slots $P$.}
	\label{fig:NMSE_P}
\end{figure}
Fig. \ref{fig:NMSE_P} provides the NMSE performance versus the number of time slots $P$, with SNR~$= 20$~dB and $T = 30$. As expected, the NMSE performance of all the algorithms improves with the increase in the number of slots $P$. It can be observed that the proposed scheme outperforms all other three algorithms and can achieve near-optimal NMSE performance of the ideal oracle-LS upper bound in all considered values of $P$. The performance gap between the proposed scheme and other existing schemes is quite large. This indicates that the proposed scheme can effectively reduce the system complexity for channel estimation.

\begin{figure}[!t]
	\centering
	\includegraphics[width=0.48\textwidth]{./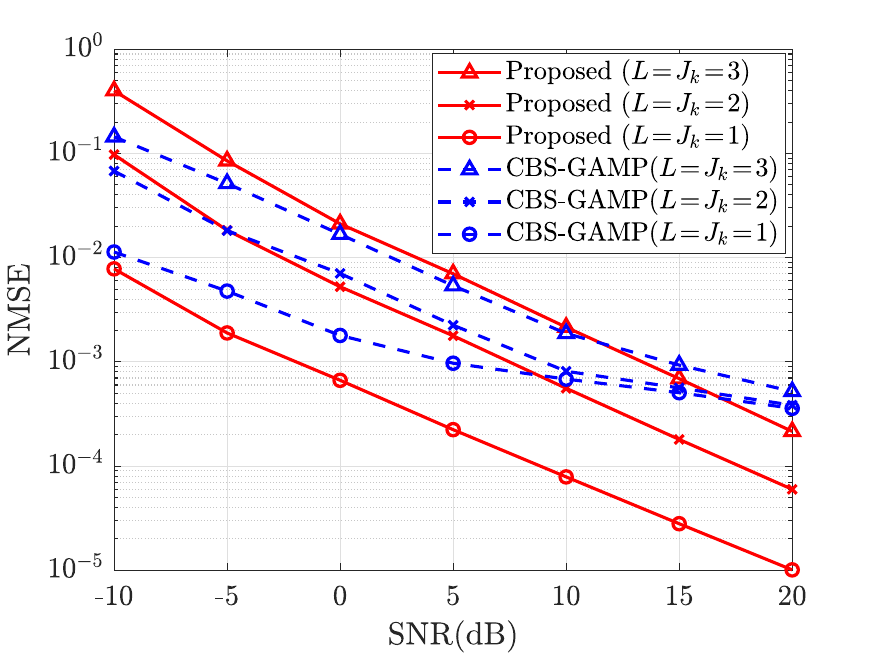}
	\caption{NMSE performance comparison against the number of paths $L,J_k$.}
	\label{fig:NMSE_L}
\end{figure}
Fig. \ref{fig:NMSE_L} illustrates the NMSE performance versus the number of paths $L$ and $J_k$, with $T = 50$ and $P = 30$. It can be observed that as the the number of paths increases, the NMSE performance of both the two schemes degrade. This is because of the increased estimation parameters and challenges in distinguishing directions among more multipath components. Furthermore, as SNR increases, the NMSE of CBS-GAMP algorithm converges to around $5\times10^{-4}$ regardless of the number of paths. In contrast, the NMSE of the proposed scheme remain decreasing across all SNR regions. This demonstrates that the proposed scheme maintains better estimation accuracy and is more robust against variations in the number of paths compared to the CBS-GAMP algorithm.

\begin{figure}[!t]
	\centering
	\includegraphics[width=0.48\textwidth]{./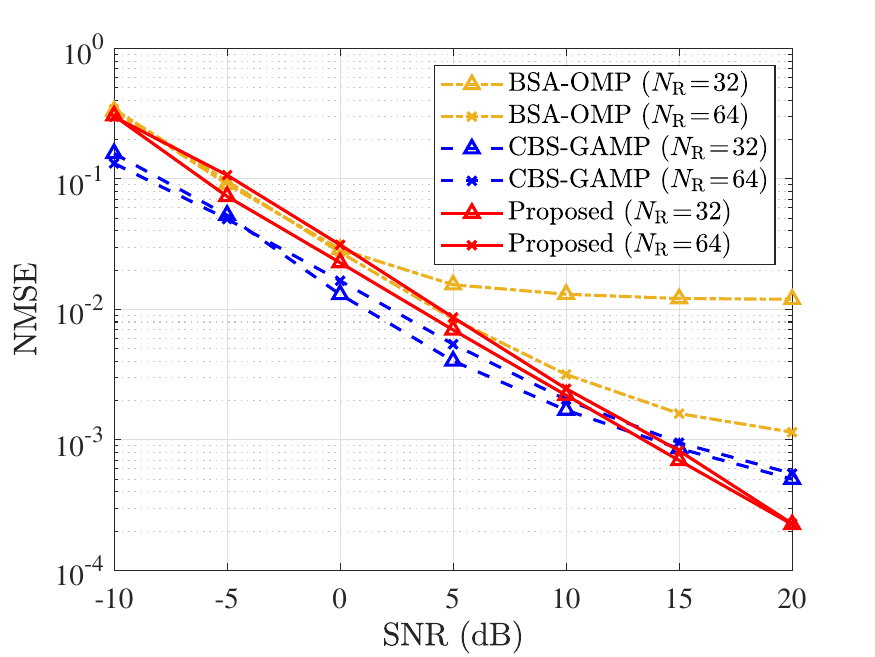}
	\caption{NMSE performance comparison against the number of reflecting elements of RIS $N_{\mathrm{R}}$.}
	\label{fig:NMSE_NRIS}
\end{figure}
Fig. \ref{fig:NMSE_NRIS} shows the NMSE performance versus the number of reflecting elements of RIS $N_{\mathrm{R}}$, with $T = 50$ and $P = 30$. It can be observed that as $N_{\mathrm{R}}$ decreases, the NMSE performance of the BSA-OMP algorithm degrades significantly. This is because a small-scale RIS cannot provide enough measurement data diversity to support OMP based algorithms. In contrast, the performance of the proposed and CBS-GAMP schemes slightly improve as $N_{\mathrm{R}}$ decreases. This indicates that these two schemes exhibit robust framework, which are less susceptible to the lack of data diversity.


\section{Conclusion}\label{sec:Conclusion}
In this paper, we investigated channel estimation in RIS-assisted wideband THz systems with consideration of beam split effect. We proposed a two-phase channel estimation scheme to achieve accurate CSI acquisition with reduced computational complexity. Specifically, in Phase I, we introduced a two-step method to achieve full CSI estimation over two SCs. In Phase II, we employed angular information obtained from Phase I to estimate the channels for the remaining SCs using a simple LS algorithm. Theoretical analysis and numerical results demonstrated that our scheme achieved near-oracle performance in terms of NMSE. Moreover, compared to existing solutions, the proposed two-phase strategy significantly reduced computational complexity while maintaining a high level of accuracy in CSI acquisition.

\appendices
\section{Proof of Proposition 1}
\label{appendix:proof_1}

We first define the parameter sequence $\bar{\vartheta}_{q_1}=\delta \cdot q_1-(Q_{\mathrm R}+1)/{Q_{\mathrm R}} , \forall q_1= 1,\dots,Q_{\mathrm R}$ which corresponds one-to-one with $Q_{\mathrm R}$ ARVs in the dictionary $\mathbf{C}_{\mathrm{R},m}$. Similarly, we define the sequence $\delta_{q_2}=-\delta\cdot q_2 +(Q_{\mathrm R}+1)/Q_{\mathrm R}$ to correspond to $Q_R$ ARVs in the conjugate dictionary $\mathbf{C}_{\mathrm{R},m}^{*}$. Thus, the column vectors of $\mathbf{C}_{\mathrm{R},m}^{*}\bullet \mathbf{C}_{\mathrm{R},m}$ can be expressed as $\mathbf{a}_{\mathrm{R},m}(\bar{\vartheta}_{q_1}-\bar{\vartheta}_{q_2})$ for $q_1,q_2 = 1,\dots,Q_{\mathrm{R}}$. The $j$-th column of the coupled dictionary $\tilde{\boldsymbol{\xi}}_{\mathrm{R},m,j}$ can be expressed as
	\begin{equation}\label{equ:xi}
		\tilde{\boldsymbol{\xi}}_{\mathrm{R},m,j}=\mathbf{a}_{\mathrm{R},m}(\delta \left( q_1-q_2 \right) ),
	\end{equation}
	where $j=\left( q_2-1 \right) Q_{\mathrm{R}}+q_1,\forall q_1,q_2 \in 1,\cdots,Q_{\mathrm{R}}$. Based on the fact that $q_1-q_2\in\left[1- Q_{\mathrm{R}},Q_{\mathrm{R}}-1\right] $, the number of unique columns of coupled dictionary $\tilde{\mathbf{\Xi}}_{\mathrm{R},m}$ is $Q_\mathrm{B}= 2Q_{\mathrm{R}}-1$. 
    \begin{figure}[htbp]
	\centering
	\includegraphics[width=0.48\textwidth]{./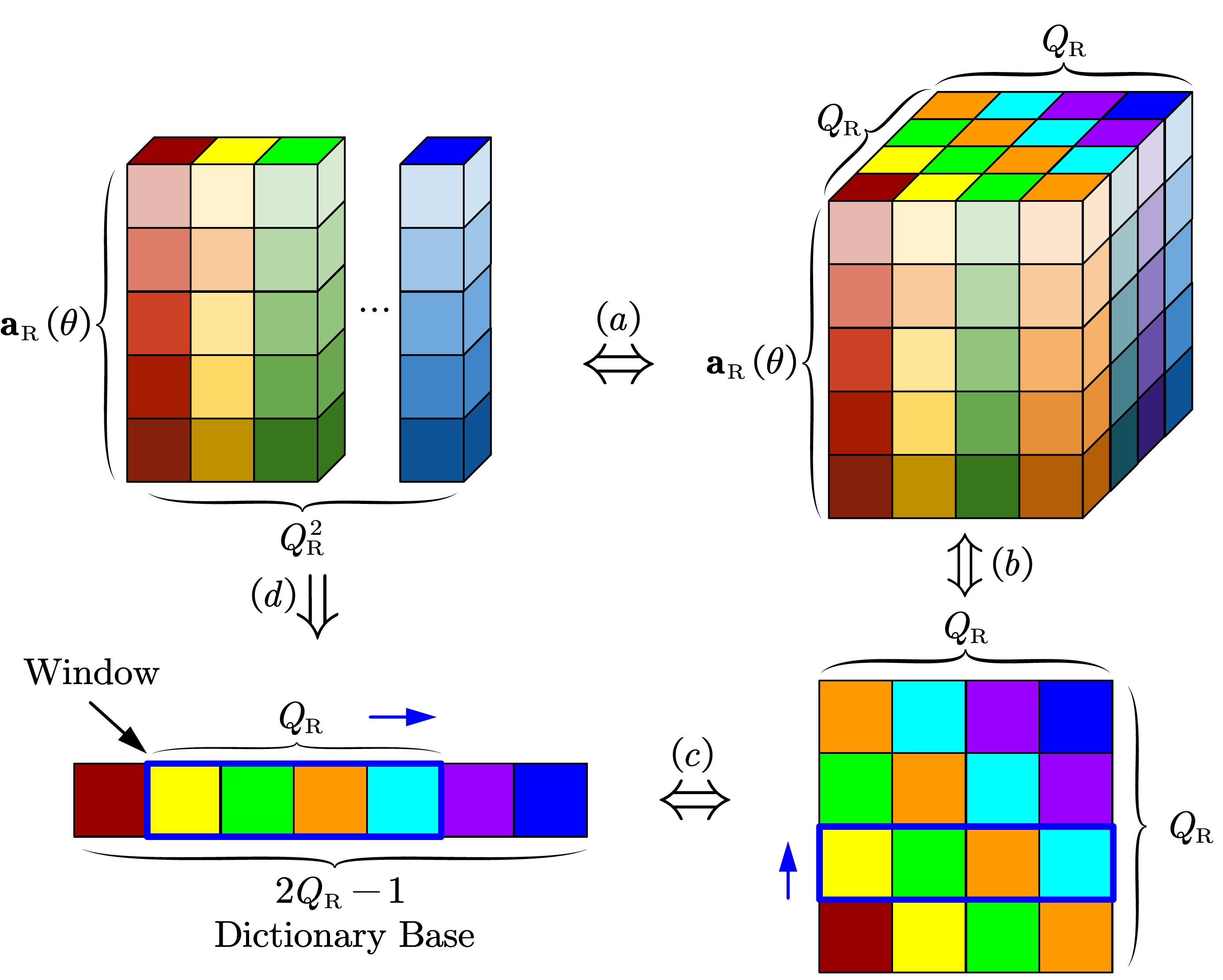}
	\caption{Coupled dictionary simplification diagram.}
	\label{fig:simply_dic}
\end{figure}
	
	Then, we simplify the coupled dictionary $\tilde{\mathbf{\Xi}}_{\mathrm{R},m}$ step by step from (a) to (d) as shown in Fig. \ref{fig:simply_dic}. As depicted in step (a), we reshape the overcomplete dictionary $\tilde{\mathbf{\Xi}}_{\mathrm{R},m}$ of length $Q_{\mathrm{R}}^2$ ARVs into a three-dimensional matrix where each column represents an ARV. Given that ARV can be uniquely determined by known angles in \eqref{equ:xi}, we use the ordered sequence $\mathcal{L}_Q = \left(1,\dots,Q_{\mathrm{R}}\right)$ to index the corresponding ARVs. In step (b), we form the index matrix $\mathbf{Q} \in \mathbb{N}^{Q_\mathrm{R}\times Q_\mathrm{R}}$ with elements $[\mathbf{Q}]_{i,j} = (\mathcal{L}_Q)_i - (\mathcal{L}_Q)_j$ for $i,j=1,\dots,Q_{\mathrm{R}}$. It is noticed that due to the cyclic shift property of this dictionary, the dictionary base can be represented entirely by combining just the first and last rows. This means that we can reconstruct the original dictionary $\tilde{\mathbf{\Xi}}_{\mathrm{R},m}$ with a sliding window of length $Q_{\mathrm{R}}$. Therefore, the simplified CBS dictionary $\mathbf{\Xi }_{\mathrm{R},m}$ can be given as in \eqref{equ:Xi_simp}.
	Moreover, substituting $q_2=Q_{\mathrm{R}}$ into \eqref{equ:xi}, the $q$-th column where $q = 1,\dots Q_\mathrm{B}$ of $\mathbf{\Xi}_{\mathrm{R},m}$ can be formulated~as
	\begin{equation}
		\boldsymbol{\xi} _{\mathrm{R},m,q}=\mathbf{a}_{\mathrm{R},m}\left( 2\frac{q-Q_{\mathrm{R}}}{Q_{\mathrm{R}}} \right).
	\end{equation}


\bibliographystyle{IEEEtran}
\bibliography{./bibtex/IEEEabrv,./bibtex/IEEEreference}

\begin{thebibliography}{10}
\providecommand{\url}[1]{#1}
\csname url@samestyle\endcsname
\providecommand{\newblock}{\relax}
\providecommand{\bibinfo}[2]{#2}
\providecommand{\BIBentrySTDinterwordspacing}{\spaceskip=0pt\relax}
\providecommand{\BIBentryALTinterwordstretchfactor}{4}
\providecommand{\BIBentryALTinterwordspacing}{\spaceskip=\fontdimen2\font plus
\BIBentryALTinterwordstretchfactor\fontdimen3\font minus
  \fontdimen4\font\relax}
\providecommand{\BIBforeignlanguage}[2]{{%
\expandafter\ifx\csname l@#1\endcsname\relax
\typeout{** WARNING: IEEEtran.bst: No hyphenation pattern has been}%
\typeout{** loaded for the language `#1'. Using the pattern for}%
\typeout{** the default language instead.}%
\else
\language=\csname l@#1\endcsname
\fi
#2}}
\providecommand{\BIBdecl}{\relax}
\BIBdecl

\bibitem{song2011present}
H.~Song and T.~Nagatsuma, ``Present and future of terahertz communications,''
  \emph{IEEE Trans. Terahertz Sci. Technol.}, vol.~1, no.~1, pp. 256--263, Sep.
  2011.

\bibitem{qin2023multi}
P.~Qin, Y.~Fu, Y.~Xie, K.~Wu, X.~Zhang, and X.~Zhao, ``Multi-agent
  learning-based optimal task offloading and {UAV} trajectory planning for
  {AGIN}-power {IoT},'' \emph{IEEE Trans. Commun.}, vol.~71, no.~7, pp.
  4005--4017, May 2023.

\bibitem{niu2024solid}
Z.~Niu, B.~Zhang, Y.~Zhang, Y.~Feng, Z.~Chen, Y.~Su, Y.~Fan, and Y.~Guo,
  ``Solid-state terahertz circuits and communication systems for {6G}: A
  review,'' \emph{Chinese J. Electron.}, vol.~50, no.~10, pp. 301--337, Sep.
  2024.

\bibitem{zhang2023coupling}
K.~Zhang, Y.~Feng, Z.~Xiaoyan, J.~Hu, N.~Xiong, S.~Guo, L.~Tang, and D.~Liu,
  ``Coupling enhancement of {THz} metamaterials source with parallel multiple
  beams,'' \emph{Chinese J. Electron.}, vol.~32, no.~4, pp. 678--682, Sep.
  2023.

\bibitem{chen2021intelligent}
Z.~Chen, B.~Ning, C.~Han, Z.~Tian, and S.~Li, ``Intelligent reflecting surface
  assisted terahertz communications toward {6G},'' \emph{IEEE Wireless
  Commun.}, vol.~28, no.~6, pp. 110--117, Dec. 2021.

\bibitem{yuan20223d}
Y.~Yuan, R.~He, B.~Ai, Z.~Ma, Y.~Miao, Y.~Niu, J.~Zhang, R.~Chen, and Z.~Zhong,
  ``A {3D} geometry-based {THz} channel model for {6G} ultra massive {MIMO}
  systems,'' \emph{IEEE Trans. Veh. Tech.}, vol.~71, no.~3, pp. 2251--2266,
  Jan. 2022.

\bibitem{zhang2023joint}
P.~Zhang, J.~Zhang, H.~Xiao, X.~Zhang, D.~W.~K. Ng, and B.~Ai, ``Joint
  distributed precoding and beamforming for {RIS}-aided cell-free massive
  {MIMO} systems,'' \emph{IEEE Trans. Veh. Tech.}, Nov. 2023.

\bibitem{bjornson2022reconfigurable}
E.~Bj{\"o}rnson, H.~Wymeersch, B.~Matthiesen, P.~Popovski, L.~Sanguinetti, and
  E.~de~Carvalho, ``Reconfigurable intelligent surfaces: A signal processing
  perspective with wireless applications,'' \emph{IEEE Signal Process. Mag.},
  vol.~39, no.~2, pp. 135--158, Mar. 2022.

\bibitem{wang2020compressed}
P.~Wang, J.~Fang, H.~Duan, and H.~Li, ``Compressed channel estimation for
  intelligent reflecting surface-assisted millimeter wave systems,'' \emph{IEEE
  Signal Process. Lett.}, vol.~27, pp. 905--909, May 2020.

\bibitem{xwei2021channel}
X.~Wei, D.~Shen, and L.~Dai, ``Channel estimation for {RIS} assisted wireless
  communications—part {II}: An improved solution based on double-structured
  sparsity,'' \emph{IEEE Commun. Lett.}, vol.~25, no.~5, pp. 1403--1407, May
  2021.

\bibitem{ardah2021trice}
K.~Ardah, S.~Gherekhloo, A.~L. de~Almeida, and M.~Haardt, ``{TRICE}: A channel
  estimation framework for {RIS}-aided millimeter-wave {MIMO} systems,''
  \emph{IEEE Signal Process. Lett.}, vol.~28, pp. 513--517, Feb. 2021.

\bibitem{zhou2022channel}
G.~Zhou, C.~Pan, H.~Ren, P.~Popovski, and A.~L. Swindlehurst, ``Channel
  estimation for {RIS}-aided multiuser millimeter-wave systems,'' \emph{IEEE
  Trans. Signal Process.}, vol.~70, pp. 1478--1492, Mar. 2022.

\bibitem{ning2021terahertz}
B.~Ning, Z.~Chen, W.~Chen, Y.~Du, and J.~Fang, ``Terahertz multi-user massive
  {MIMO} with intelligent reflecting surface: Beam training and hybrid
  beamforming,'' \emph{IEEE Trans. Veh. Tech.}, vol.~70, no.~2, pp. 1376--1393,
  2021.

\bibitem{jian2020modified}
M.~Jian and Y.~Zhao, ``A modified off-grid {SBL} channel estimation and
  transmission strategy for {RIS}-assisted wireless communication systems,'' in
  \emph{Proc. IEEE Int. Wireless Commun. Mobile Comput.}\hskip 1em plus 0.5em
  minus 0.4em\relax IEEE, Jul. 2020, pp. 1848--1853.

\bibitem{ning2023beamforming}
B.~Ning, Z.~Tian, W.~Mei, Z.~Chen, C.~Han, S.~Li, J.~Yuan, and R.~Zhang,
  ``Beamforming technologies for ultra-massive {MIMO} in terahertz
  communications,'' \emph{IEEE Open J. Commun. Soc.}, vol.~4, pp. 614--658,
  Feb. 2023.

\bibitem{tan2019delay}
J.~Tan and L.~Dai, ``Delay-phase precoding for {THz} massive {MIMO} with beam
  split,'' in \emph{Proc. IEEE Global Commun. Conf.}\hskip 1em plus 0.5em minus
  0.4em\relax IEEE, Feb. 2019, pp. 1--6.

\bibitem{elbir2023bsa}
A.~M. Elbir and S.~Chatzinotas, ``{BSA-OMP}: Beam-split-aware orthogonal
  matching pursuit for {THz} channel estimation,'' \emph{IEEE Wireless Commun.
  Lett.}, vol.~12, no.~4, pp. 738--742, Apr. 2023.

\bibitem{dovelos2021channel}
K.~Dovelos, M.~Matthaiou, H.~Q. Ngo, and B.~Bellalta, ``Channel estimation and
  hybrid combining for wideband terahertz massive {MIMO} systems,'' \emph{IEEE
  J. Sel. Areas Commun.}, vol.~39, no.~6, pp. 1604--1620, Apr. 2021.

\bibitem{su2024channel}
X.~Su, R.~He, B.~Ai, Y.~Niu, and G.~Wang, ``Channel estimation for {RIS}
  assisted {THz} systems with beam split,'' \emph{IEEE Commun. Lett.}, Jan.
  2024.

\bibitem{wu2023parametric}
J.~Wu, S.~Kim, and B.~Shim, ``Parametric sparse channel estimation for
  {RIS}-assisted terahertz systems,'' \emph{IEEE Trans. Commun.}, Jun. 2023.

\bibitem{ozen2024beam}
H.~Ozen, O.~Yilmaz, and G.~M. Guvensen, ``Beam-squint-aware channel estimation
  for dual-wideband {UPA}-type {RIS}-aided massive {MIMO},'' \emph{IEEE Commun.
  Lett.}, Aug. 2024.

\bibitem{ma2021wideband}
S.~Ma, W.~Shen, J.~An, and L.~Hanzo, ``Wideband channel estimation for
  {IRS}-aided systems in the face of beam squint,'' \emph{IEEE Trans. Wireless
  Commun.}, vol.~20, no.~10, pp. 6240--6253, Oct. 2021.

\bibitem{liu2021cascaded}
Y.~Liu, S.~Zhang, F.~Gao, J.~Tang, and O.~A. Dobre, ``Cascaded channel
  estimation for {RIS} assisted {mmWave} {MIMO} transmissions,'' \emph{IEEE
  Wireless Commun. Lett.}, vol.~10, no.~9, pp. 2065--2069, Sep. 2021.

\bibitem{tan2021wideband}
J.~Tan and L.~Dai, ``Wideband channel estimation for {THz} massive {MIMO},''
  \emph{China Commun.}, vol.~18, no.~5, pp. 66--80, May 2021.

\bibitem{wei2022accurate}
C.~Wei, Z.~Yang, J.~Dang, P.~Li, H.~Wang, and X.~Yu, ``Accurate wideband
  channel estimation for {THz} massive {MIMO} systems,'' \emph{IEEE Commun.
  Lett.}, vol.~27, no.~1, pp. 293--297, Jan. 2022.

\bibitem{su2024generalized}
X.~Su, R.~He, P.~Zhang, and B.~Ai, ``Generalized approximating message passing
  based channel estimation for {RIS}-aided {THz} communications with beam
  split,'' in \emph{IEEE Proc. Veh. Tech. Conf.}, Sep. 2024, pp. 1--5.

\bibitem{yang20161}
H.~Yang, F.~Yang, S.~Xu, Y.~Mao, M.~Li, X.~Cao, and J.~Gao, ``A 1-bit 10
  $\times$ 10 reconfigurable reflectarray antenna: design, optimization, and
  experiment,'' \emph{IEEE Trans. Antennas Propag.}, vol.~64, no.~6, pp.
  2246--2254, Apr. 2016.

\bibitem{yu2019enabling}
X.~Yu, D.~Xu, and R.~Schober, ``Enabling secure wireless communications via
  intelligent reflecting surfaces,'' in \emph{Proc. IEEE Glob. Commun.
  Conf.}\hskip 1em plus 0.5em minus 0.4em\relax IEEE, Dec. 2019, pp. 1--6.

\bibitem{tse2005fundamentals}
D.~Tse and P.~Viswanath, \emph{Fundamentals of wireless communication}.\hskip
  1em plus 0.5em minus 0.4em\relax Cambridge University Press, 2005.

\bibitem{su2023wideband}
R.~Su, L.~Dai, and D.~W.~K. Ng, ``Wideband precoding for {RIS}-aided {THz}
  communications,'' \emph{IEEE Trans. Commun.}, Mar. 2023.

\bibitem{dai2022delay}
L.~Dai, J.~Tan, Z.~Chen, and H.~V. Poor, ``Delay-phase precoding for wideband
  {THz} massive {MIMO},'' \emph{IEEE Trans. Wireless Commun.}, vol.~21, no.~9,
  pp. 7271--7286, Mar. 2022.

\bibitem{bellili2019generalized}
F.~Bellili, F.~Sohrabi, and W.~Yu, ``Generalized approximate message passing
  for massive {MIMO} {mmWave} channel estimation with {Laplacian} prior,''
  \emph{IEEE Trans. Commun.}, vol.~67, no.~5, pp. 3205--3219, Jan. 2019.

\bibitem{kundu1996modified}
D.~Kundu, ``Modified {MUSIC} algorithm for estimating {DOA} of signals,''
  \emph{Signal Process.}, vol.~48, no.~1, pp. 85--90, Dec. 1996.

\bibitem{liu2015remarks}
C.-L. Liu and P.~Vaidyanathan, ``Remarks on the spatial smoothing step in
  coarray {MUSIC},'' \emph{IEEE Signal Process. Lett.}, vol.~22, no.~9, pp.
  1438--1442, Mar. 2015.

\bibitem{tropp2007signal}
J.~A. Tropp and A.~C. Gilbert, ``Signal recovery from random measurements via
  orthogonal matching pursuit,'' \emph{IEEE Trans. Inf. Theory}, vol.~53,
  no.~12, pp. 4655--4666, Dec. 2007.

\end{thebibliography}


\end{document}